\documentclass[journal,12pt,onecolumn,draftclsnofoot]{IEEEtran}

\usepackage{cite}
\usepackage{hyperref}
\usepackage{amsmath,amssymb,amsfonts}
\usepackage{graphicx}
\usepackage{multicol}
\usepackage{textcomp}
\usepackage{caption}
\usepackage{subcaption}
\usepackage{optidef}
\usepackage{bbm}
\usepackage{bm}
\usepackage{empheq}
\usepackage{amsthm}
\usepackage{algorithm}
\usepackage[noend]{algpseudocode}
\usepackage{mathtools,amssymb}

\usepackage{xcolor}
\def\BibTeX{{\rm B\kern-.05em{\sc i\kern-.025em b}\kern-.08em
    T\kern-.1667em\lower.7ex\hbox{E}\kern-.125emX}}
    
\newtheorem{theorem}{Theorem}
\newtheorem{lemma}{Lemma}

\newtheorem{remark}{Remark}
\newtheorem{proposition}{Proposition}
\newtheorem{corollary}{Corollary}
    
\begin{document}

\title{Minimizing Age of Incorrect Information for Unreliable Channel with Power Constraint\\
}

\author{%
\IEEEauthorblockN{Yutao Chen and Anthony Ephremides}\\
    \IEEEauthorblockA{Department of Electrical and Computer Engineering, University of Maryland}
}

\maketitle

\begin{abstract}
Age of Incorrect Information (AoII) is a newly introduced performance metric that considers communication goals. Therefore, comparing with traditional performance metrics and the recently introduced metric - Age of Information (AoI), AoII achieves better performance in many real-life applications. However, the fundamental nature of AoII has been elusive so far. In this paper, we consider the AoII in a system where a transmitter sends updates about a multi-state Markovian source to a remote receiver through an unreliable channel. The communication goal is to minimize AoII subject to a power constraint. We cast the problem into a Constrained Markov Decision Process (CMDP) and prove that the optimal policy is a mixture of two deterministic threshold policies. Afterward, by leveraging the notion of Relative Value Iteration (RVI) and the structural properties of threshold policy, we propose an efficient algorithm to find the threshold policies as well as the mixing coefficient. Lastly, numerical results are laid out to highlight the performance of AoII-optimal policy.
\end{abstract}

\section{Introduction}
Applications, such as autonomous vehicles, control systems, and unmanned aerial vehicles (UAVs), rely heavily on the exchange of time-sensitive information. In these applications, the freshness of information is critical. Conventional metrics such as throughput and delay are not always optimal when considering the freshness of information. The Age of Information (AoI) introduced in \cite{b1} offers a new way to quantify the freshness of information. Let $W_t$ be the generation time of the last received packet. AoI is a function defined by $\Delta_{AoI}(t)=t-W_t$.

Recently, research on AoI have been growing fast \cite{b2,b3,b4}. As AoI gives priority to the updates that can greatly reduce the information time lag at the destination, it provides performance improvement in many applications, especially when the freshness of information is important \cite{b5}. However, in real-life applications, the communication goals vary, and it is not always the goal to keep the information at the destination as fresh as possible. For example, in temperature monitoring, the communication goal is to monitor the abnormal temperature fluctuation of the system and quickly respond to temperature abnormalities. Thus, the freshness of information is not the only priority. We also need to monitor abnormalities as large abnormalities are harmful to the system even when they are new.

Noticing the shortcomings of AoI, researchers proposed several variations on the notion of age. In \cite{b6}, age of synchronization (AoS) is proposed in the framework of web caching. Value of Information of Update (VoIU) is proposed in \cite{b7} which captures the degree of importance of the information received at the destination. In \cite{b8} and \cite{b9}, the concept of effective age is proposed aiming to make connections between age and estimation error. The authors in \cite{b10} introduce the metric Urgency of Information (UoI) which considers the context of information.

To be even more adaptable to various communication goals, a new metric Age of Incorrect Information (AoII) is introduced in \cite{b11}. It captures well not only the freshness of information but also the information content of the transmitted packets and the knowledge at the destination. Several works have been done since the introduction. The authors in \cite{b12} study the AoII with general time function and provide several real-life applications to highlight the advantages of AoII over AoI and the error-based measure approach. In \cite{b13}, extensive numerical results are laid out to compare the performances of AoII and other performance metrics under different policies. However, the considered communication model and the chosen dissatisfaction functions in these papers are simple. Thus, the performance of more general AoII in a more complicated system is still unclear. In this paper, we study the system where the source is modeled by a multi-state Markov chain and adopt the AoII that considers the quantified mismatch between the source and the knowledge at the destination.

At the same time, we investigate the optimization problem in the presence of power constraints. Similar constraints are considered in \cite{b21}, with the goal of minimizing AoI. Our problem is more complicated because AoI ignores the information content of the transmitted updates. \cite{b22,b23,b24,b25} study the problem of remote estimation under resource constraints. However, they focus mainly on minimizing the estimation error but ignore the effect of time as persistent errors will cause more harm to the system than short-lived errors in many real-life applications.

The main contributions of this work are: 1) We adopt the AoII that considers the quantified mismatch and model the source using a multi-state Markov chain. 2) We study the minimization of AoII under power constraints. 3) We rigorously prove the structural properties of the optimal policy and propose an efficient algorithm to obtain the optimal policy.

The rest of this paper is organized as follows: In section \ref{sec-SystemOverview}, we discuss the communication model and the system dynamic under the chosen AoII. Section \ref{sec-ProblemOptimization} presents a step-by-step analysis of the optimization problem and introduces the proposed algorithm. Lastly, in Section \ref{sec-NumericalResults}, numerical results are laid out.

\section{System Overview}\label{sec-SystemOverview}
\subsection{Communication Model}\label{sec-SystemModel}
We consider a slotted-time system in which a transmitter sends updates about a process to a remote receiver through an unreliable channel. The transmitted update will not be corrupted during the transmission but the transmission will not necessarily succeed. When the transmission fails, the update will be discarded and it will not affect the transmitter's decision at the next time slot. We denote the channel realization as $r_t$ where $r_t = 1$ if the transmission succeeds and $r_t = 0$ otherwise. We assume $r_t$ is independent and identically distributed over the time slots. We define $Pr(r_t = 1) = p_s$ and $Pr(r_t = 0) = 1-p_s = p_f$. We notice that, in many status-update systems, the size of the update is very small so that the transmission time for an update is much smaller than the time unit used to measure the dynamic of the process. Thus, when a transmission attempt succeeds, the update is assumed to be received instantly by the receiver. This assumption will provide us with analytical benefits, and a similar assumption is also made in \cite{b14}. The source process $\{X_t\}_{t\in\mathbbm{N}}$ is modeled by an N-state Markov chain where transmissions only happen between adjacent states with probability $2p$ and themselves with probability $1-2p$. An illustration of the Markovian source is shown in Fig. \ref{fig-MarkovianSource}.

\begin{figure*}%
\centering
\begin{subfigure}{\columnwidth}
\includegraphics[width=\textwidth]{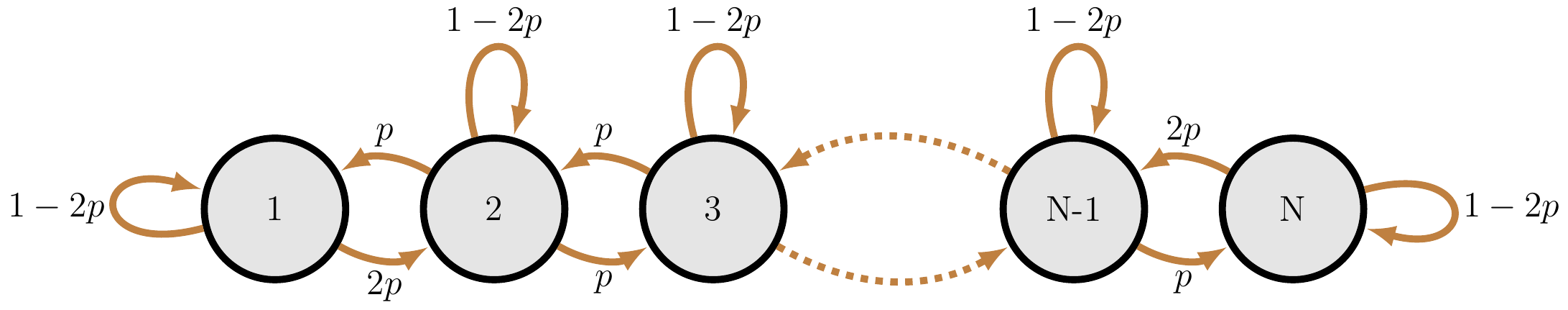}%
\caption{The Markovian source.}%
\label{fig-MarkovianSource}%
\end{subfigure}\hfill%
\begin{subfigure}{\columnwidth}
\includegraphics[width=\textwidth]{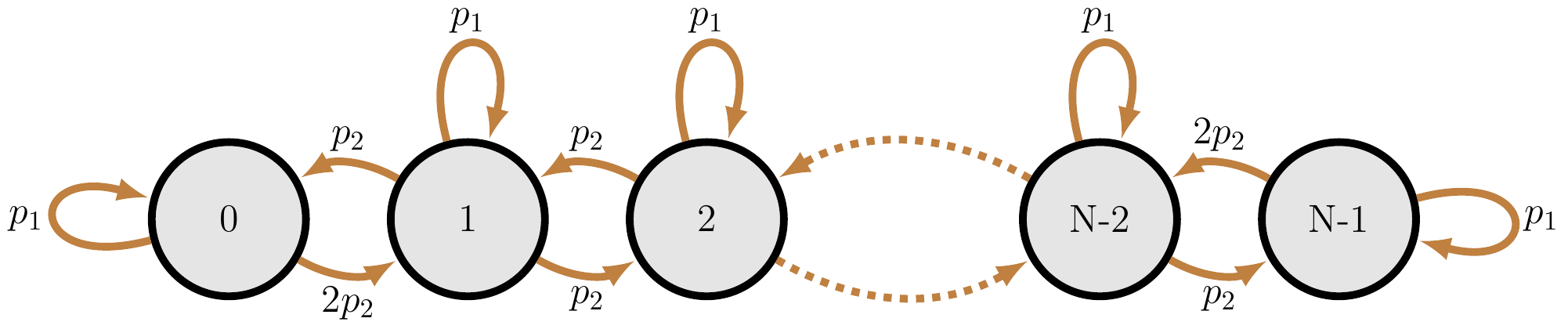}%
\caption{The evolution of $d$.}%
\label{fig-EvolutionD}%
\end{subfigure}\hfill
\caption{Illustrations of the Markovian source and the evolution of $d$.}
\end{figure*}

The transmitter is capable of generating update $X_t$ by sampling the process at any time on its own will. However, the sampling opportunities only occur at the beginning of each time slot. We assume the transmitter is also capable of making transmission attempt in the same time slot the sampling happens. Every time the transmission succeeds, the receiver will use the received update as its new estimate $\hat{X}_t$. The receiver will send an $ACK/NACK$ packet to inform the transmitter whether it has received a new update. We suppose that the $ACK/NACK$ packets will be delivered reliably, and the transmission time is negligible as the packets are very small in general. Therefore, if $ACK$ is received, the transmitter knows that the transmission succeeded, and the receiver's estimate changed to the transmitted update. If $NACK$ is received, the transmitter knows that the receiver did not receive the new update, and the receiver's estimate did not change. Hence, we can assume that the transmitter always knows the receiver's estimate.

\subsection{Age of Incorrect Information}\label{sec-AoII}
We consider the Age of Incorrect Information (AoII), $\Delta_{AoII}(X_t,\hat{X}_t,t)$, where the age increases as long as the receiver is unaware of the correction information of the source process and the increment of age is enhanced by the mismatch between $X_t$ and $\hat{X}_t$. We define $U_t$ as the last time instant before time $t$ (including $t$) that the receiver's estimate is correct. Then, $\Delta_{AoII}(X_t,\hat{X}_t,t)$ can be written as
\begin{equation}\label{eq-AoII}
\Delta_{AoII}(X_t,\hat{X}_t,t) = \sum_{s=U_t+1}^{t}\left(g(X_s,\hat{X}_s)\times F(s-U_t)\right),
\end{equation}
where $g(X_t,\hat{X}_t)$ can be any function that reflects the mismatch between $X_t$ and $\hat{X}_t$. $F(t) \triangleq f(t) - f(t-1)$ where $f(t)$ can be any non-decreasing time function. In this paper, we let $X_t\in\{1,2,...,N\}$ be the state of the source process, $g(X_t,\hat{X}_t)=|X_t-\hat{X}_t|$, and $f(t) = t$. Consequently, $F(t)=1$ by its definition and $d_t\triangleq g(X_t,\hat{X}_t)\in\{0,1,...,N-1\}$. A sample path of $\Delta_{AoII}(X_t,\hat{X}_t,t)$ is shown in Fig. \ref{fig-SamplePath}. 

\begin{figure}
\centering
\includegraphics[width=\columnwidth]{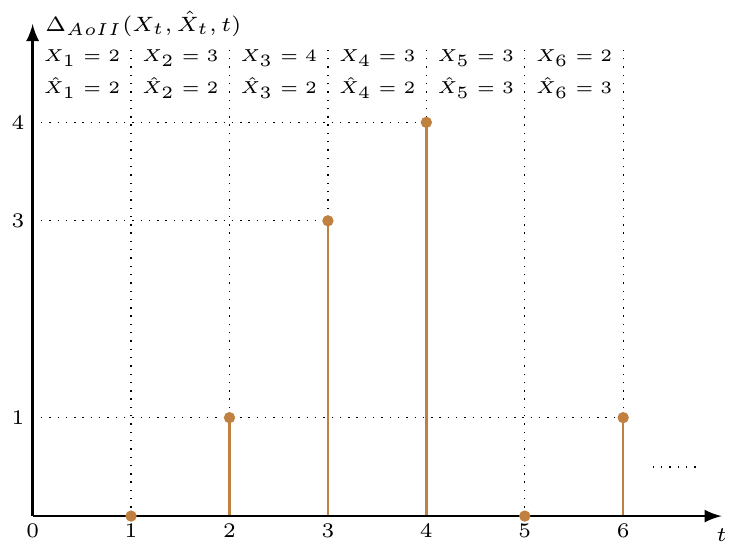}
\caption{A sample path of $\Delta_{AoII}(X_t,\hat{X}_t,t)$.}
\label{fig-SamplePath}
\end{figure}

\subsection{System Dynamic}\label{sec-SystemDynamic}
Now, we tackle down the system dynamic which can be fully captured by the dynamic of the pair $(d_t,\Delta_t)$. $\Delta_t$ is short for $\Delta_{AoII}(X_t,\hat{X}_t,t)$. Thus, it is essential to characterize the relationship between $(d_{t+1},\Delta_{t+1})$ and $(d_t,\Delta_t)$. We notice that the relationship depends on the transmitter's action and its result. Therefore, we define $a_t\in\{0,1\}$ as the transmitter's action at time $t$ where $a_t=1$ if the transmitter makes the transmission attempt and $a_t=0$ otherwise. Then, we can divide our discussion into the following three cases.
\paragraph{Case 1} $a_t=0$. In this case, no new update is received by the receiver. Thus, the estimate $\hat{X}_{t+1}$ will be nothing but $\hat{X}_t$, and $X_t$ will evolve following the Markov chain shown in Fig. \ref{fig-MarkovianSource}. When the state of the source process does not change which happens with probability $1-2p$, we have $X_{t+1}=X_t$. Then, $d_{t+1}=d_t$. When the state of the source process changes, $d_{t+1}$ depends on the value of $d_t$. Thus, we further distinguish between the following cases:
\begin{itemize}
\item  When $d_t=0$, according to the Markovin source reported in Fig. \ref{fig-MarkovianSource}, $d_{t+1}=1$ with probability $2p$.
\item When $1\leq d_t\leq N-2$, to simplify our analysis, we assume, for any $X_t\in\{1,2,...,N\}$, $Pr(X_{t+1} = X_t-1\ |\ X_t)=Pr(X_{t+1} = X_t+1\ |\ X_t)=p$. Then, when $X_t>\hat{X}_t$, $d_{t+1}=|X_{t}\pm1-\hat{X}_t|=|d_t\pm1|=d_t\pm1$. When $X_t<\hat{X}_t$, $d_{t+1}=|X_{t}\pm1-\hat{X}_t|=|-d_t\pm1|=d_t\mp1$. Combining together, $d_{t+1} = d_t\pm1$ with equal probability $p$.
\item When $d_t=N-1$, $X_t$ must be either $1$ or $N$ and $\hat{X}_t$ must be either $N$ or $1$, respectively. Combining with the Markovin source reported in Fig. \ref{fig-MarkovianSource}, $d_{t+1} = N-2$ with probability $2p$.
\end{itemize}
Let us denote by $P_{d,d'}$ the transition probability from $d$ to $d'$. Then, the results can be summarized as follows.
\begin{equation}\label{eq-TransitionPro}
\begin{cases}
& P_{d,d} = 1-2p \quad \ \ \ \ \ \ \ \ for\ 0\leq d\leq N-1,\\
& P_{d,d+1} = P_{d,d-1} = p \quad for\ 1\leq d\leq N-2,\\
& P_{0,1} = P_{N-1,N-2} = 2p.
\end{cases}
\end{equation}
Such dynamic can be characterized by the Markov chain shown in Fig. \ref{fig-EvolutionD} with $p_1 = 1-2p$ and $p_2 = p$.

According to \eqref{eq-AoII}, the value of $\Delta_{t+1}$ can be captured by the following two cases.
\begin{itemize}
\item When $d_{t+1} = 0$, the receiver's estimate at time $t+1$ is correct. By definition, $U_{t+1} = t + 1$. Hence, $\Delta_{t+1}=0$.
\item When $d_{t+1}\neq0$, the receiver's estimate at time $t+1$ is incorrect. In this case, $U_{t+1} = U_t$ by definition. Therefore, $\Delta_{t+1} = \Delta_{t} + d_{t+1}$.
\end{itemize}
To sum up,
\[
\Delta_{t+1} = \mathbbm{1}_{\{d_{t+1}\neq0\}}\times (\Delta_t + d_{t+1}).
\]
\paragraph{Case 2} $a_t=1$ but $r_t=0$. In this case, we notice that no new update is received by the receiver. Following the same trajectory as in \textit{Case 1}, we can conclude that the dynamic of $d_t$ can be characterized by the Markov chain shown in Fig. \ref{fig-EvolutionD} with $p_1 = p_f(1-2p)$ and $p_2 = p_fp$. $\Delta_{t+1}$ is fully dictated by $\Delta_t$ and $d_{t+1}$ as detailed in \textit{Case 1}.
\paragraph{Case 3} $a_t = 1$ and $r_t=1$. In this case, the receiver receives the update instantly. Thus, $U_{t+1} = t$. Then, we can conclude that $\Delta_{t+1} = d_{t+1}$ using \eqref{eq-AoII}. Since the update is received instantly, we have $d_{t+1}\in\{0,1\}$. More precisely,
\begin{equation*}
\begin{split}
& Pr\big((d_{t+1},\Delta_{t+1}) = (0,0) | (d_t,\Delta_t), a_t=1\big) = p_s(1-2p),\\
& Pr\big((d_{t+1},\Delta_{t+1}) = (1,1) | (d_t,\Delta_t), a_t=1\big) = 2p_sp.
\end{split}
\end{equation*}
Combining the above three cases, we can fully capture the evolution of $(d_t, \Delta_t)$.
\subsection{Problem Formulation}
We consider the problem where there is a unit power consumption along with each transmission attempt regardless of the result. At the same time, the transmitter has a power budget $\alpha<1$. We define $\phi = (a_0, a_1, ...)$ as a sequence of actions the transmitter takes and denote all the feasible series of actions as $\Phi$. Then, our problem can be formulated as
\begin{argmini!}|l|
{\phi \in \Phi} {\bar{\Delta}_{\phi}\triangleq\lim_{T\to\infty} \frac{1}{T} \mathbb{E}_{\phi}\left[\sum_{t=0}^{T-1}\Delta_t \mid \mathcal{X}_0\right]}{\label{eq-Constrained}}{\label{eq-ObjectValue}}
\addConstraint{\bar{R}_{\phi}\triangleq \lim_{T\to\infty} \frac{1}{T} \mathbb{E}_{\phi}\left[\sum_{t=0}^{T-1} a_t \mid \mathcal{X}_0\right]\leq\alpha,}\label{eq-ObjectConstraint}
\end{argmini!}
where $\mathcal{X}_0$ is the initial state of the system. As \eqref{eq-ObjectConstraint} shows, the system is resource-constrained. Thus, we realize a necessity to require the transmission attempts to help minimize AoII. More precisely, we require $Pr((0,0)\mid (d_t,\Delta_t),a_t=1) \geq Pr((0,0)\mid (d_t,\Delta_t),a_t=0)$ for any $(d_t,\Delta_t)$. Leveraging the system dynamic in Section \ref{sec-SystemDynamic}, we conclude that it is sufficient to require $p\in [0,\frac{1}{3}]$. Therefore, we only consider the case of $p\in [0,\frac{1}{3}]$ throughout the rest of this paper. A similar assumption is also made in \cite{b12}.

We notice that solving problem \eqref{eq-Constrained} is equivalent to solving a \textit{Constrained Markov Decision Process (CMDP)}. To this end, we adopt the Lagrangian approach.

\section{Problem Optimization}\label{sec-ProblemOptimization}
\subsection{Lagrangian Approach}\label{sec-Lagrange}
First of all, we write the constrained optimization problem \eqref{eq-Constrained} into its Lagrangian form.
\[
\mathcal{L}(\phi, \lambda) = {\lim_{T\to\infty} \frac{1}{T} \mathbb{E}_{\phi}\left[\sum_{t=0}^{T-1}(\Delta_t+\lambda a_t) \ | \ \mathcal{X}_0\right]}-\lambda\alpha,
\]
where $\lambda$ is the Lagrange multiplier. Then, the corresponding dual function will be
\begin{equation}\label{eq-Dual}
\mathcal{G}(\lambda)=\underset{\phi \in \Phi}{\text{min}}\ \mathcal{L}(\phi, \lambda).
\end{equation}
According to the results in \cite{b15}, the optimal policy for the constrained problem \eqref{eq-Constrained} can be characterized by the optimal policies for the minimization problem \eqref{eq-Dual} under certain $\lambda$. Thus, we start with solving problem \eqref{eq-Dual} for any given $\lambda\geq0$. As $\lambda\alpha$ is independent of policy, we can ignore it which leads to the following optimization problem.
\begin{mini}|l|
{\phi \in \Phi} {\lim_{T\to\infty} \frac{1}{T} \mathbb{E}_{\phi}\left[\sum_{t=0}^{T-1}(\Delta_t +\lambda a_t)\ | \ \mathcal{X}_0\right].}{\label{eq-Minimization}}{}
\end{mini}
The above problem can be cast into an infinite horizon with average cost \textit{Markov Decision Process (MDP)} $\mathcal{M} = (\mathcal{X},\mathcal{A},\mathcal{P},\mathcal{C})$, where
\begin{itemize}
\item $\mathcal{X}$ denotes the state space: the state is $x=(d,\Delta)$. We define $x_d=d$ and $x_{\Delta}=\Delta$. We will use $x$ and $(d,\Delta)$ to represent the state interchangeably for the rest of this paper.
\item $\mathcal{A}$ denotes the action space: the two feasible actions are making the transmission attempt $(a=1)$ and staying idle $(a=0)$. The action space is independent of the state and the time. More precisely, $a\in\mathcal{A}(x,t)=\mathcal{A}=\{0,1\}$.
\item $\mathcal{P}$ denotes the state transition probabilities: we define $P_{x,x'}(a)$ as the probability that action $a$ at state $x$ will lead to state $x'$. The values of $P_{x,x'}(a)$ can be obtained easily from Section \ref{sec-SystemDynamic}.
\item $\mathcal{C}$ denotes the instant cost: when the system is at state $x$ and action $a$ is chosen, the instant cost is $C(x,a)=x_{\Delta}+\lambda a$.
\end{itemize}

\subsection{Structural Results}\label{sec-Structural}
In this section, we provide the key structural properties of the optimal policy for $\mathcal{M}$, which plays a vital role in the analysis later on. The optimal policy for $\mathcal{M}$ is captured by its value function $V(x)$, which can be obtained by solving the Bellman equation. In the infinite horizon with average cost \textit{MDP}, the Bellman equation is defined as
\begin{equation}\label{eq-Bellman}
\theta+V(x)= x_{\Delta} + \min_{a \in \{0,1\}} \left\{\lambda a+\sum_{x'\in\mathcal{X}}P_{x,x'}(a)V(x')\right\},
\end{equation}
where $\theta$ is the minimal value of \eqref{eq-Minimization}. A canonical procedure to solve the Bellman equation is applying the \textit{Relative Value Iteration (RVI)}. To this end, we denote by $V_{\nu}(\cdot)$ the estimated value function at iteration $\nu$ of \textit{RVI}. We initialize $V_0(x) = x_{\Delta}$. Then, the estimated value function is updated in the following way.
\begin{equation}\label{eq-BellmanRelative}
V_{\nu+1}(x) = Q_{\nu+1}(x) - Q_{\nu+1}(x^{ref}),
\end{equation}
where $x^{ref}$ is the reference state. $Q_{\nu+1}(x)$ is the interim value function and is calculated by applying the right-hand side of \eqref{eq-Bellman}. More precisely,
\begin{equation}\label{eq-BellmanUpdate}
Q_{\nu+1}(x)= x_{\Delta} + \min_{a \in \{0,1\}} \left\{\lambda a + \sum_{x'\in\mathcal{X}}P_{x,x'}(a)V_{\nu}(x')\right\}.
\end{equation}
\textit{RVI} is guaranteed to converge to $V(\cdot)$ when $\nu\rightarrow+\infty$ regardless of the initialization \cite{b16}. However, it requires infinitely many iterations to achieve the exact solution. To conquer the impracticality, we leverage the special properties of our system and provide the structural property of $V_{\nu}(\cdot)$, which turns out to be enough to characterize the structure properties of the optimal policy for $\mathcal{M}$. We start with the following lemma.
\begin{lemma}[Monotonicity]\label{le-IncreasingV}
The estimated value function $V_{\nu}(x)$ is increasing in both $x_d$ and $x_{\Delta}$ at any iteration $\nu$.
\end{lemma}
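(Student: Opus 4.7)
\bigskip\noindent\textbf{Proof proposal for Lemma \ref{le-IncreasingV}.}

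The plan is to induct on the iteration index $\nu$. The base case is immediate: $V_0(x)=x_{\Delta}$ is constant in $x_d$ and increasing in $x_{\Delta}$. For the inductive step, assume $V_{\nu}$ is non-decreasing in both coordinates; then since $V_{\nu+1}(x)=Q_{\nu+1}(x)-Q_{\nu+1}(x^{ref})$ differs from $Q_{\nu+1}$ by an additive constant, it suffices to prove that $Q_{\nu+1}$ is non-decreasing in both coordinates. Writing $h_a(x)\triangleq\sum_{x'}P_{x,x'}(a)V_{\nu}(x')$, it further suffices to show that each of $h_0$ and $h_1$ is non-decreasing in $x_d$ and in $x_{\Delta}$, because the minimum and the extra term $x_{\Delta}$ in \eqref{eq-BellmanUpdate} preserve monotonicity.

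For monotonicity in $x_{\Delta}$, I would use the explicit post-transition rule $\Delta_{t+1}=\mathbbm{1}_{\{d_{t+1}\neq 0\}}(\Delta_t+d_{t+1})$ derived in Section \ref{sec-SystemDynamic}. Conditional on $d_{t+1}$, the next state is either $(0,0)$ (independent of $\Delta$) or $(d_{t+1},\Delta+d_{t+1})$, which is non-decreasing in $\Delta$ by the induction hypothesis on the $\Delta$-coordinate. Averaging over $d_{t+1}$ preserves this, and for $a=1$ the additional $p_s$-mass on $(0,0)$ and $(1,1)$ does not involve $\Delta$ at all, so $h_1$ is also non-decreasing in $\Delta$.

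The substantive step is monotonicity in $x_d$. Here I would introduce the auxiliary function $\tilde V(k)\triangleq V_{\nu}(k,\Delta+k)$ for $k\geq 1$ and $\tilde V(0)\triangleq V_{\nu}(0,0)$, so that $h_0(d,\Delta)=\mathbb{E}[\tilde V(d_{t+1})\mid d_t=d]$ for the natural $d$-chain in Fig.~\ref{fig-EvolutionD}. Using the induction hypothesis twice, $\tilde V$ is non-decreasing on $\{0,1,\dots,N-1\}$: for $k\geq 1$ it is increasing by joint monotonicity of $V_{\nu}$, and $\tilde V(1)=V_{\nu}(1,\Delta+1)\geq V_{\nu}(0,\Delta+1)\geq V_{\nu}(0,0)=\tilde V(0)$. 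It then remains to show that the $d$-chain is stochastically monotone, i.e.\ $\Pr(d_{t+1}\geq k\mid d_t=d)$ is non-decreasing in $d$ for every $k$. Checking the three transition blocks of \eqref{eq-TransitionPro} at $d=0$, generic $d$, and $d=N-1$ reduces to the single inequality $2p\leq 1-p$, which is exactly the standing assumption $p\leq 1/3$ (and thereby explains the restriction). The action $a=1$ case reuses this: the $p_f$-component is the same chain and the $p_s$-component transitions to $(0,0)$ or $(1,1)$ independently of $d$, so it contributes a $d$-independent constant that does not break monotonicity.

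The main obstacle is the $x_d$-direction, for two reasons: the composite function $\tilde V$ mixes both coordinates of $V_{\nu}$ (so I truly need the full two-dimensional induction hypothesis rather than a coordinatewise one), and the stochastic monotonicity of the $d$-chain is exactly where the problem-wide assumption $p\in[0,1/3]$ enters. Once stochastic monotonicity is verified, the rest is a coupling argument and the standard observation that $V_{\nu+1}$ inherits monotonicity from $Q_{\nu+1}$ up to a constant shift, closing the induction.
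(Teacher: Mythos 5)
Your proposal is correct, and it shares the paper's skeleton---induction over the \textit{RVI} iterations, reducing everything to showing that the one-step Bellman update \eqref{eq-BellmanUpdate} preserves monotonicity---but it handles the substantive step by a genuinely different device. The paper verifies monotonicity in $x_d$ by brute force: it writes out $V^a_{\nu+1}(d_1,\Delta)-V^a_{\nu+1}(d_2,\Delta)$ in four separate cases and regroups the terms into sums $\kappa_1,\dots,\kappa_4$ of differences of $V_\nu$ at comparable states, with coefficients $(1-3p)$, $(1-2p)$, $p$ that are nonnegative exactly when $p\in[0,\tfrac13]$. Those regroupings are precisely the monotone couplings that your argument produces automatically: stochastic monotonicity of the $d$-chain of \eqref{eq-TransitionPro} together with monotonicity of the composite $\tilde V(k)=V_\nu(k,\Delta+k)$ is the conceptual content of the paper's computations, and your route makes it transparent that the standing assumption $p\le\tfrac13$ is exactly the condition for $\Pr(d_{t+1}\ge k\mid d_t=d)$ to be non-decreasing in $d$. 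What the paper's explicit algebra buys in exchange is reuse: essentially the same case-by-case expressions reappear as the formulas for $\delta V_\nu$ in the proof of Proposition \ref{prop-ThresholdPolicy}. Three small repairs to your write-up: (i) the boundary comparison $d=N-2$ versus $d=N-1$ requires $p\le 1-2p$ in addition to $2p\le 1-p$ at the bottom boundary---both are equivalent to $p\le\tfrac13$, so nothing breaks, but it is not literally a single inequality; (ii) your chain $\tilde V(1)\ge V_\nu(0,\Delta+1)\ge V_\nu(0,0)$ passes through the state $(0,\Delta+1)$, which does not exist since $\Delta=0$ if and only if $d=0$; replace it with $V_\nu(1,\Delta+1)\ge V_\nu(1,1)\ge V_\nu(0,0)$, the last inequality being the comparison against the reference state that both proofs must treat as a special case; (iii) you establish weak monotonicity whereas \eqref{eq-IterativeRelop} asserts strict inequalities---this is immaterial for the downstream threshold argument (which only needs monotonicity plus unboundedness of the cost), and the paper's own base case $V_0(d,\Delta)=\Delta$ is in fact only weakly monotone in $d$, but you should state explicitly which version you are proving.
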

\begin{proof}
Leveraging the iterative nature of \textit{RVI}, we use induction to prove the desired results. The complete proof is in Appendix \ref{proof-IncreasingV}.
\end{proof}
We refer state $x$ as active if the optimal action at $x$ is making the transmission attempt and inactive otherwise. Then, leveraging Lemma \ref{le-IncreasingV}, we provide the key structural properties of the optimal policy for $\mathcal{M}$.
\begin{proposition}[Structural properties]\label{prop-ThresholdPolicy}
The optimal policy for $\mathcal{M}$ under any $\lambda\geq0$ is a threshold policy which possesses the following properties.
\begin{itemize}
\item State $(0,0)$ will never be active.
\item For states $x$ with fixed $x_d\neq0$, the optimal action $a^*$ will switch from $a^*=0$ to $a^*=1$ as $x_{\Delta}$ increases and the switching point (i.e. threshold) is non-increasing in $x_d$.
\end{itemize}
\end{proposition}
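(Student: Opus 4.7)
The plan is to characterize the optimal action at state $x=(d,\Delta)$ via the one-step look-ahead difference
\[
D(x)\ \triangleq\ \lambda+\sum_{x'\in\mathcal{X}} P_{x,x'}(1)V(x')-\sum_{x'\in\mathcal{X}} P_{x,x'}(0)V(x'),
\]
so that $a^*(x)=1$ is optimal iff $D(x)\le 0$. Since Lemma~\ref{le-IncreasingV} is preserved along \textit{RVI} and \textit{RVI} converges to the value function, the limiting $V$ is non-decreasing in both $x_d$ and $x_\Delta$, so I can work with $V$ directly. Plugging in the three cases of Section~\ref{sec-SystemDynamic} yields
\[
D(x)\ =\ \lambda-p_s\bigl[S(d,\Delta)-(1-2p)V(0,0)-2pV(1,1)\bigr],
\]
where $S(d,\Delta)\triangleq\sum_{d'} P_{d,d'}\, V\!\bigl(d',\,(\Delta+d')\mathbbm{1}_{\{d'\neq 0\}}\bigr)$ is the expected post-transition value under $a=0$. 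The proposition then reduces to three monotonicity claims about $D$: that $D((0,0))\ge 0$; that $D(d,\cdot)$ is non-increasing in $\Delta$ for each fixed $d\neq 0$; and that $D(\cdot,\Delta)$ is non-increasing in $d$ for each fixed $\Delta$.

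The first two claims are immediate. At state $(0,0)$ the estimate already equals the source, so a successful transmission cannot change anything: under either action the next state is $(0,0)$ w.p.\ $1-2p$ and $(1,1)$ w.p.\ $2p$, so $D((0,0))=\lambda\ge 0$ and $(0,0)$ is never active. For fixed $d\neq 0$, each summand in $S(d,\Delta)$ is either $V(d',\Delta+d')$ with $d'\neq 0$, which is non-decreasing in $\Delta$ by Lemma~\ref{le-IncreasingV}, or $V(0,0)$, which is constant in $\Delta$; hence $S(d,\cdot)$ is non-decreasing and $D(d,\cdot)$ is non-increasing, yielding the switching structure in $\Delta$.

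The main obstacle is monotonicity of $S(d,\Delta)$ in $d$. I would expand $S(d+1,\Delta)-S(d,\Delta)$ according to the support of the $d$-chain: the interior range $2\le d\le N-3$, and the boundary pairs $d\in\{0,1\}$ (where $V(0,0)$ replaces a shifted value) and $d\in\{N-2,N-1\}$ (where the mass $2p$ is lumped on the left-neighbour). In the interior case the difference collapses to a combination with coefficients $(1-2p),p,p$ of three value differences of the form $V(d'{+}1,\Delta'{+}1)-V(d',\Delta')$, each non-negative by Lemma~\ref{le-IncreasingV}. In the two boundary cases the regrouping leaves coefficients $(1-3p)$ and $p$ on two such monotone differences; here the hypothesis $p\in[0,\tfrac{1}{3}]$ is exactly what ensures $1-3p\ge 0$, closing the argument. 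Combining this with the fact that $D(d+1,\Delta)\le D(d,\Delta)$ yields the second bullet of the proposition. The only genuine difficulty is the careful case split and regrouping at the chain's boundaries, where the $p\le 1/3$ assumption is indispensable.
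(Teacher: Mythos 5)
Your proposal follows essentially the same route as the paper: it characterizes the optimal action by the sign of the action-value difference $\delta V = V^1 - V^0$, reduces both bullets to monotonicity of that difference in $\Delta$ and in $d$ via Lemma~\ref{le-IncreasingV}, and handles the chain's boundary cases with the regrouped coefficients $(1-2p)$, $p$, and $(1-3p)$, which is exactly where the paper also invokes $p\in[0,\tfrac{1}{3}]$ (the paper works with the RVI iterates $V_\nu$ and passes to the limit, whereas you work with the limiting $V$ directly, but this is cosmetic). The only small omission is that monotonicity of $D(d,\cdot)$ alone does not guarantee the switch actually occurs; the paper additionally notes that the value function is unbounded in $\Delta$, so the bracketed term in $D$ diverges and the switching point exists.
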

\begin{proof}
The optimal action at state $x$ is captured by the sign of $\delta V(x) \triangleq V^1(x) - V^0(x)$ where $V^a(x)$ is the value function resulting from taking action $a$ at state $x$. Then, we characterize the sign of $\delta V(x)$ using Lemma \ref{le-IncreasingV}. The complete proof is in Appendix \ref{proof-OptimalPolicy}.
\end{proof}
We define the threshold for the states with fixed $d\neq0$ as the smallest $\Delta$ such that $a^*=1$. We notice that state $(0,0)$ will never be active if optimal policy is adopted. Hence, we can characterize the optimal policy using the thresholds. In the following, an optimal policy is denoted by a vector $\bm{n}_{\lambda}$ where $(\bm{n}_{\lambda})_{i}$ is the threshold for the states with $d=i$. The subscript $\lambda$ indicates the dependency between the optimal policy and $\lambda$.

\subsection{Finite-State MDP Approximation}\label{sec-RVIA}
In the sequel, we tackle down the problem of finding the optimal policy for $\mathcal{M}$. Direct application of \textit{RVI} becomes impractical as we need to estimate infinitely many value functions at each iteration. To overcome this difficulty, we use \textit{Approximating Sequence Method (ASM)} \cite{b17} and rigorously show the convergence of this approximation. To this end, we construct another $\mathcal{M}^{(m)} = (\mathcal{X}^{(m)},\mathcal{A},\mathcal{P}^{(m)},\mathcal{C})$ by truncating the value of $\Delta$. More precisely, we impose
\begin{subequations}
    \begin{empheq}[left=\mathcal{X}^{(m)}:\empheqlbrace\,]{align}
      & x_d^{(m)} \in \{0,1,...,N-1\}, \\
      & x_\Delta^{(m)} \in \{0,1,...,m\},
    \end{empheq}
\end{subequations}
where $m$ is the predetermined maximal value of $\Delta$. The transition probabilities from $x\in\mathcal{X}^{(m)}$ to $z\in\mathcal{X}-\mathcal{X}^{(m)}$ (called \textit{excess probabilities}) are redistributed to the states $x'\in\mathcal{X}^{(m)}$ in the following way.
\begin{equation*}
P^{(m)}_{x,x'}(a) = P_{x,x'}(a) + \sum_{z\in\mathcal{X}-\mathcal{X}^{(m)}}P_{x,z}(a)q_{z}(x'),
\end{equation*}
where $q_{z}(x')$ is the probability of distributing state $z$ to state $x'$ and $\sum_{x'\in\mathcal{X}^{(m)}}q_{z}(x')=1$. We choose $q_{z}(x')=\mathbbm{1}_{\{x'_d=z_d\}}\times\mathbbm{1}_{\{x'_\Delta=m\}}$. So, the transition probabilities $P^{(m)}_{x,x'}(a)$ satisfy the following.
\begin{equation*}
P^{(m)}_{x,x'}(a) = \begin{cases}
          P_{x,x'}(a) &\text{if}\ \ x'_\Delta<m, \\
          P_{x,x'}(a) + \sum_{G(z,x')}P_{x,z}(a) &\text{if}\ \ x_\Delta'= m, \\
     \end{cases}
\end{equation*}
where $G(z,x') = \{z:z_d=x_d',z_\Delta>m\}$. The action space $\mathcal{A}$ and the instant cost $\mathcal{C}$ are the same as defined in $\mathcal{M}$.
\begin{theorem}[Convergence]\label{theo-ASMConvergence}
The sequence of optimal policies for $\mathcal{M}^{(m)}$ will converge to the optimal policy for $\mathcal{M}$ as $m\rightarrow\infty$.
\end{theorem}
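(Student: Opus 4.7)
The plan is to invoke the convergence theorem for the Approximating Sequence Method from \cite{b17}, which requires verifying a standard set of hypotheses guaranteeing that the relative value functions and the optimal stationary policies of $\mathcal{M}^{(m)}$ converge to those of $\mathcal{M}$ as $m\to\infty$. The conditions split into three groups: well-posedness of the limiting MDP $\mathcal{M}$; consistency of the truncations, meaning nested state spaces together with pointwise convergence of transition kernels and stage costs; and a uniformity/tightness condition preventing probability mass from escaping through the truncation boundary $\{x_\Delta=m\}$.

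I would first establish well-posedness of $\mathcal{M}$ by exhibiting a stationary policy with finite average cost, for instance the always-transmit policy $\phi^{AT}$. Under $\phi^{AT}$ the receiver's estimate becomes correct in each slot with probability at least $p_s(1-2p)>0$, so $\{(d_t,\Delta_t)\}$ is positive recurrent and admits a stationary distribution with finite mean in $\Delta$ (bounded via a geometric-tail argument). Hence $\bar{\Delta}_{\phi^{AT}}+\lambda<\infty$ and the minimum in \eqref{eq-Minimization} is finite. Consistency of the truncations is immediate from the construction: $\mathcal{X}^{(m)}\subseteq\mathcal{X}^{(m+1)}$ with $\bigcup_m \mathcal{X}^{(m)}=\mathcal{X}$; the stage cost $C(x,a)$ is unchanged on common states; and for any fixed triple $(x,x',a)$ the excess probabilities vanish once $m$ exceeds $x_\Delta+1$, so $P^{(m)}_{x,x'}(a)=P_{x,x'}(a)$ eventually.

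The harder step is controlling $V^{(m)}(\cdot)$ uniformly in $m$. Choosing the reference state $x^{ref}=(0,0)$, which Proposition \ref{prop-ThresholdPolicy} guarantees is inactive, I would observe that Lemma \ref{le-IncreasingV} and Proposition \ref{prop-ThresholdPolicy} carry over verbatim to every $\mathcal{M}^{(m)}$: the redistribution map sends excess mass to the largest $\Delta$ available, and therefore preserves the stochastic monotonicity used in the induction. Combining monotonicity in $\Delta$ with a direct cost bound obtained by executing $\phi^{AT}$ inside $\mathcal{M}^{(m)}$ yields an $m$-independent upper envelope $W(x)$ with $V^{(m)}(x)\le W(x)$ for all sufficiently large $m$, which is exactly the tightness needed to rule out mass escape through $\{x_\Delta=m\}$. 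With this envelope in hand, the hypotheses of \cite{b17} are satisfied and give $V^{(m)}(x)\to V(x)$ pointwise together with convergence of the associated minimizers.

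Finally, pointwise convergence of $V^{(m)}$ translates into policy convergence via the threshold structure: each optimal policy is parametrized by the integer vector $\bm{n}^{(m)}_{\lambda}$ whose coordinates are determined by the sign changes of $\delta V^{(m)}(\cdot)$, so $V^{(m)}\to V$ forces $(\bm{n}^{(m)}_{\lambda})_i \to (\bm{n}_{\lambda})_i$ coordinate-wise. The main obstacle I anticipate is producing the envelope $W$ rigorously; this is the step that guarantees the truncation artifact at $x_\Delta=m$ cannot distort the value function arbitrarily far from the boundary, and it underwrites the entire convergence argument.
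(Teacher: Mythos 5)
Your overall route coincides with the paper's: both reduce the theorem to verifying the two hypotheses of the Approximating Sequence Method in \cite{b17}, and both do so by (i) carrying the monotonicity of Lemma \ref{le-IncreasingV} over to the truncated chains --- legitimate precisely because the redistribution map $q_z$ sends excess mass to the state $(z_d,m)$, which preserves the stochastic monotonicity used in the induction --- and (ii) using the always-transmit policy to manufacture a finite upper bound. So there is no divergence of method; the issue is that the one step you explicitly defer is exactly where the paper's technical content sits.

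Concretely, executing $\phi^{AT}$ inside $\mathcal{M}^{(m)}$ only yields the \emph{truncated} first-passage cost $c^{(m)}_{x,0}(\psi)$ to the reference state $(0,0)$, which a priori depends on $m$; calling this an ``$m$-independent envelope'' begs the question. The missing ingredient is the comparison $c^{(m)}_{x,0}(\psi)\le c_{x,0}(\psi)$, where $c_{x,0}(\psi)$ is the first-passage cost in the untruncated chain. The paper obtains it by first showing $c_{x,0}(\psi)$ is increasing in $x$ (another induction of the Lemma \ref{le-IncreasingV} type) and then observing that redistribution moves mass from $z$ with $z_\Delta>m$ to the lower-cost state $(z_d,m)$, so the truncated one-step recursion for the first-passage cost is dominated by the untruncated one (inequality \eqref{eq-ASMInequality}); combined with Propositions 4 and 5 of \cite{b19}, this gives $h^{(m)}_{\gamma}(x)\le c^{(m)}_{x,0}(\psi)\le c_{x,0}(\psi)<\infty$, the required envelope $M(x)$. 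Note also that the hypotheses of \cite{b17} are phrased for the discounted relative value functions $h^{(m)}_{\gamma}$ and the finite-horizon quantities $v_{\gamma,n}$, not for the average-cost value function directly, so your monotonicity argument must be run at that level (the paper's verification of its Assumption 2 is precisely the inequality $v_{\gamma,n}(z_d,m)\le v_{\gamma,n}(z)$). Your closing observation that value convergence plus the threshold structure forces coordinate-wise convergence of the thresholds is a sensible supplement, but beware of ties where $\delta V(x)=0$; the paper itself settles for the weaker conclusion of \cite{b17} that any limit point of the optimal policies for $\mathcal{M}^{(m)}$ is optimal for $\mathcal{M}$.
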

\begin{proof}
We show that our system verifies the two assumptions given in \cite{b17}. Then, the convergence is guaranteed according to the results in \cite{b17}. The complete proof is in Appendix \ref{proof-ASMConvergence}.
\end{proof}
For a given truncation parameter $m$, the state space $\mathcal{X}^{(m)}$ is finite with size $|\mathcal{X}^{(m)}| \propto m$. When $m$ is huge, the basic \textit{RVI} will be inefficient since the minimum operator in \eqref{eq-BellmanUpdate} requires calculations for both feasible actions at every state. In the following, we propose an improved \textit{RVI} which avoids minimum operators at certain states. To this end, we claim that the properties in Proposition \ref{prop-ThresholdPolicy} are also possessed by the optimal policies for $\mathcal{M}^{(m)}$ at any iteration $\nu$ of \textit{RVI}. The proof is omitted since it is very similar to what we did in Section \ref{sec-Structural}. Utilizing Proposition \ref{prop-ThresholdPolicy}, we can conclude that, for any state $x^{(m)}$, if there exists an active state $y^{(m)}$ such that $y^{(m)}_{\Delta}\leq x^{(m)}_{\Delta}$ and $y^{(m)}_d\leq x^{(m)}_d$, then $x^{(m)}$ must also be active. The update step at each iteration of the improved \textit{RVI} can be summarized as follows. For each $x^{(m)}\in\mathcal{X}^{(m)}$,
\begin{itemize}
\item if $y^{(m)}$ exists, we can determine the optimal action at $x^{(m)}$ immediately, and the minimum operator is avoided.
\item if $y^{(m)}$ does not exist, the optimal action at $x^{(m)}$ is determined by applying the minimum operator.
\end{itemize}
In this way, we avoid $\sum_{i=1}^N(m-n_i)$ minimum operators at each iteration of \textit{RVI} where $n_i=(\bm{n}^t)_i$ and $\bm{n}^t$ is the optimal policy at iteration $\nu$ of \textit{RVI}. In almost all cases, we have $n_i\ll m$. The pseudocode is given in Algorithm \ref{alg-RVIA} of Appendix \ref{sec-Algorithm}.

\subsection{Expected Transmission Rate}\label{sec-TransmissionRate}
In this section, we calculate the expected transmission rate $\bar{R}_{\bm{n}}$ under given threshold policy $\bm{n}$. It enables us to develop an efficient algorithm for finding the optimal policy for \eqref{eq-Constrained}. As we can see in \eqref{eq-ObjectConstraint}, $\bar{R}_{\bm{n}}$ is nothing but the expected average number of transmission attempts made. Thus, it can be fully captured by the stationary distribution of the \textit{Discrete-Time Markov Chain (DTMC)} induced by $\bm{n}$. More precisely,
\begin{equation*}
\bar{R}_{\bm{n}} = \sum_{d=1}^{N-1}\sum_{\Delta=n_d}^{+\infty}\pi_d(\Delta),
\end{equation*} 
where $\pi_d(\Delta)$ is the steady-state probability of state $(d,\Delta)$ and $n_d = (\bm{n})_d$. To obtain the stationary distribution, we utilize the balance equation associated with the induced \textit{DTMC} which takes the following form
\begin{equation}\label{eq-BalanceEquation}
\pi(x) = \sum_{x'\in\mathcal{X}}P_{x',x}(a^{\bm{n}}_{x'})\pi(x'),
\end{equation}
where $a^{\bm{n}}_{x'}$ is the action suggested by policy $\bm{n}$ at state $x'$. The problem arises since the state space of the induced \textit{DTMC} is infinite. To overcome the difficulty, we present the following proposition.
\begin{proposition}[Expected transmission rate]\label{prop-TransmissionRate}
The expected transmission rate under threshold policy $\bm{n}$ is
\[
\bar{R}_{\bm{n}} = \sum_{d=1}^{N-1}\left(\sum_{\Delta=n_d}^{\tau-1}\pi_d(\Delta) + \Pi_d(\tau)\right),
\]
where $\tau=max\{\bm{n}\}$. $\pi_d(\Delta)$'s and $\Pi_d(\tau)$'s are the solution to the following finite system of linear equations.\\
For each $1\leq d\leq N-1$:
\begin{equation}\label{eq-BalanceGeneral}
\begin{cases}
& \pi_d(\Delta) = 0\ \ \ \ \ \ \ \ \ \ \ \ \ \ \ \ \ \ \ \ \ \ \ \ \ \ \ \ \ \ \ \ \ \ \ \ \ \ \ \ \ \ \ \ \ \ \ \ for\ 0<\Delta<l_d,\\
&
\begin{split}
\pi_d(\Delta) &= \sum_{d'=1}^{N-1}P_{d',d}(1-p_sa_{d',\Delta-d})\pi_{d'}(\Delta-d)\ \ \ \ \ \ \ \ for\ \max\{2,l_d\}\leq\Delta\leq \tau-1,
\end{split}\\
&
\begin{split}
\Pi_d(\tau) & = \sum_{d'=1}^{N-1}P_{d',d}\Bigg(\sum_{\Delta=\tau-d}^{\tau-1}(1-p_sa_{d',\Delta})\pi_{d'}(\Delta) + p_f\Pi_{d'}(\tau)\Bigg),
\end{split}\\
\end{cases}
\end{equation}
\begin{equation}\label{eq-Balance0}
\begin{split}
\pi_0(0) =& (1-2p)\pi_0(0) + p\sum_{\Delta=1}^{\tau-1}(1-p_sa_{1,\Delta})\pi_1(\Delta) + p_fp\Pi_1(\tau)+\\
& p_s(1-2p)\sum_{d=1}^{N-1}\left(\sum_{\Delta=n_d}^{\tau-1}\pi_d(\Delta) + \Pi_d(\tau)\right),
\end{split}
\end{equation}
\begin{equation}\label{eq-Balance1}
\begin{split}
\pi_1(1) = 2p\pi_0(0)+2p_sp\sum_{d=1}^{N-1}\left(\sum_{\Delta=n_d}^{\tau-1}\pi_d(\Delta)+\Pi_d(\tau)\right),
\end{split}
\end{equation}
\begin{equation}\label{eq-BalanceSum}
\begin{split}
\sum_{d=1}^{N-1}\left(\sum_{\Delta=l_d}^{\tau-1}\pi_d(\Delta) + \Pi_d(\tau)\right) + \pi_0(0) = 1,
\end{split}
\end{equation}
where $l_d=\frac{d^2+d}{2}$ and $a_{d,\Delta}$ is the action suggested by $\bm{n}$ at state $(d,\Delta)$.
\end{proposition}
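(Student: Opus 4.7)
The plan is to derive the claimed system directly from the global balance equation \eqref{eq-BalanceEquation} of the DTMC induced by $\bm{n}$, by first locating the support of the stationary distribution and then lumping the tail states $\{(d,\Delta):\Delta\geq\tau\}$ into the aggregate masses $\Pi_d(\tau)$. Standard ergodicity of the induced DTMC (a single closed communicating class containing $(0,0)$) guarantees the existence and uniqueness of $\pi$, so once I produce a consistent finite linear system it will pin down the relevant values.

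First, I would identify which states carry nonzero stationary probability. Because $\Delta_{t+1}=\mathbbm{1}_{\{d_{t+1}\neq 0\}}(\Delta_t+d_{t+1})$, the value $\Delta=0$ forces $d=0$, so the states $(0,\Delta)$ with $\Delta>0$ are unreachable from $(0,0)$. To show $\pi_d(\Delta)=0$ for $0<\Delta<l_d$, I would argue by induction on $d$: any sample path from $(0,0)$ that first hits level $d>0$ must pass through levels $1,2,\dots,d-1$ in order (since $d$ moves by $\pm 1$ among nonzero values and must be reset to $0$ before it can ``skip ahead''), and each visit to level $k$ contributes at least $k$ to $\Delta$ before a reset, so the minimum $\Delta$ upon reaching level $d$ is $1+2+\cdots+d=l_d$.

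Second, I would rewrite the one-step balance for a generic state $(d,\Delta)$ with $d\geq 1$ and $\Delta\geq 2$. From the dynamics in Section \ref{sec-SystemDynamic}, the only predecessors are pairs $(d',\Delta-d)$ with $d'\geq 1$: the transition $d'\to d$ in the $d$-chain occurs with probability $P_{d',d}$ under $a_{d',\Delta-d}=0$, and with probability $p_f P_{d',d}$ under $a_{d',\Delta-d}=1$ (the successful-transmission branch would have sent the chain to $(0,0)$ or $(1,1)$ rather than to $(d,\Delta)$). Writing this uniformly as $(1-p_s a_{d',\Delta-d})P_{d',d}$ recovers the second line of \eqref{eq-BalanceGeneral}. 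Summing the same identity over all $\Delta\geq\tau$, using the fact that $a_{d',\Delta'}=1$ whenever $\Delta'\geq\tau=\max\{\bm{n}\}$ and reindexing $\Delta'=\Delta-d$, yields the aggregate equation for $\Pi_d(\tau)$.

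Third, I would write separate balance equations for the two exceptional states $(0,0)$ and $(1,1)$, which receive mass from additional sources. For $(0,0)$ the predecessors are: itself (with weight $1-2p$, since Proposition \ref{prop-ThresholdPolicy} guarantees $a=0$ there), each $(1,\Delta)$ via the $d$-chain transition $1\to 0$ with combined weight $p(1-p_s a_{1,\Delta})$ covering both the idle and the failed-transmission branches, and every active state via a successful transmission followed by the source staying put with weight $p_s(1-2p)$; collecting these and splitting the $(1,\Delta)$ sum at $\tau$ yields \eqref{eq-Balance0}. For $(1,1)$ the contributions come from $(0,0)$ with weight $2p$ and from every active state via a successful transmission followed by a one-step source move with weight $2p_sp$, giving \eqref{eq-Balance1}. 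The normalization \eqref{eq-BalanceSum} is just $\sum\pi=1$ over the identified support, and the displayed formula for $\bar{R}_{\bm{n}}$ then follows from $\bar{R}_{\bm{n}}=\sum_{d\geq 1}\sum_{\Delta\geq n_d}\pi_d(\Delta)$ split at $\tau$. The main bookkeeping obstacle will be the $(1,\Delta)$ column in the balance for $(0,0)$, where predecessors simultaneously contribute through the $d$-chain transition $1\to 0$ and through successful transmissions that reset the chain; once that case split on $a_{1,\Delta}\in\{0,1\}$ is handled consistently, the remaining algebra is routine.
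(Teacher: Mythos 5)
Your proposal is correct and follows essentially the same route as the paper's proof in Appendix E: identifying the unreachable (``virtual'') states with $\Delta<l_d$ via the same cumulative-increment argument, writing the one-step balance with predecessor weight $P_{d',d}(1-p_s a_{d',\Delta-d})$, lumping the tail $\Delta\geq\tau$ into $\Pi_d(\tau)$ using that all such states are active, and treating $(0,0)$ and $(1,1)$ as the exceptional recipients of successful-transmission mass. No meaningful differences to report.
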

\begin{proof}
We cast the induced infinite-state Markov chain to an equivalent finite-state Markov chain with size depending on the policy. Then, $\pi_d(\Delta)$'s and $\Pi_d(\tau)$'s are the steady-state probabilities of the finite-state Markov chain. The complete proof is in Appendix \ref{proof-TransmissionRate}.
\end{proof}
\begin{remark}\label{rmk-overdetermined}
We can verify that \eqref{eq-Balance0} is a linear combination of the other equations in  the system of linear equations. Therefore, we can exclude \eqref{eq-Balance0} in practice.
\end{remark}
The system of linear equations can be reformulated into the matrix form $\bm{A}\bm{\pi} = \bm{b}$ where $\bm{\pi}$ is the unknowns and $\bm{A}$, $\bm{b}$ can be obtained easily from Proposition \ref{prop-TransmissionRate}. However, solving a finite system of linear equations can still be problematic, especially when the system is huge. For our problem, the size of $\bm{A}$ is $\mathcal{O}((N-1)\tau)$, and we notice that $\bm{A}$ is sparse.

In general, solving a large system of linear equations of size $\mathcal{O}(n)$ requires $\mathcal{O}(n^2)$ storage and $\mathcal{O}(n^3)$ floating-point arithmetic operations when $\bm{A}$ is dense. In the case of sparse $\bm{A}$, the computational cost will be less. The sparse matrix algorithms are designed to solve equations in time and space proportional to $\mathcal{O}(n) + \mathcal{O}(cn)$ where $c$ is the average number of non-zero entries in each column. Although there are cases where this linear target cannot be met, the complexity of sparse linear algebra is far less than that in dense case \cite{b18}. Generally speaking, the complexity depends on the sparsity of $\bm{A}$. By exploiting the zero entries, we can often reduce the storage and computational requirements to $\mathcal{O}(cn)$ and $\mathcal{O}(cn^2)$, respectively.

The calculation of $\bar{R}_{\bm{n}}$ is constantly needed, and it requires a significant amount of time when the thresholds in $\bm{n}$ are huge. Hence, we provide an efficient alternative that can approximate the expected transmission rate in this case.
\begin{corollary}[Approximation]\label{prop-RateApproximation}
When the thresholds in $\bm{n}$ are huge, the expected transmission rate under policy $\bm{n}$ can be approximated as
\[
\bar{R}_{\bm{n}} \approx \sum_{d=1}^{N-1}\left(\sum_{\Delta=n_d}^{\tau-1}\pi_d(\Delta) + \Pi_d(\tau)\right),
\]
where $\tau=max\{\bm{n}\}$. $\pi_d(\Delta)$'s and $\Pi_d(\tau)$'s are the solution to the following finite system of linear equations.\\
For each $1\leq d\leq N-1$:
\begin{equation}\label{eq-ApproximationD}
\begin{cases}
&
\begin{split}
\pi_d(\Delta) &= \rho\sum_{d'=1}^{N-1}P_{d',d}(1-p_sa_{d',\Delta-d})\sigma_{d'}(\Delta-d)\ \ \ \ \ \ \ \ for\ \eta+1\leq\Delta\leq\eta+d,
\end{split}\\
&
\begin{split}
\pi_d(\Delta) &= \sum_{d'=1}^{N-1}P_{d',d}(1-p_sa_{d',\Delta-d})\pi_{d'}(\Delta-d)\ \ \ \ \ \ \ \ \ \ for\ \eta+d+1\leq\Delta\leq\tau-1,
\end{split}\\
&
\begin{split}
\Pi_d(\tau) & = \sum_{d'=1}^{N-1}P_{d',d}\Bigg(\sum_{\Delta=\tau-d}^{\tau-1}(1-p_sa_{d',\Delta})\pi_{d'}(\Delta)+ p_f\Pi_{d'}(\tau)\Bigg),
\end{split}
\end{cases}
\end{equation}
\begin{equation}\label{eq-Approximation0}
\begin{split}
\pi_0(0) = & (1-2p)\pi_0(0) + p\Pi_1(\eta) + p_fp\Pi_1(\tau) +  p\sum_{\Delta=\eta+1}^{\tau-1}(1-p_sa_{1,\Delta})\pi_1(\Delta)+\\
& p_s(1-2p)\sum_{d=1}^{N-1}\left(\sum_{\Delta=n_d}^{\tau-1}\pi_d(\Delta)+\Pi_d(\tau)\right),
\end{split}
\end{equation}
\begin{equation}\label{eq-Approximation1}
\pi_1(1) = 2p\pi_0(0)+2p_sp\sum_{d=1}^{N-1}\left(\sum_{\Delta=n_d}^{\tau-1}\pi_d(\Delta) + \Pi_d(\tau)\right),
\end{equation}
\begin{equation}\label{eq-ApproximationSum}
\begin{split}
\sum_{d=1}^{N-1}\left(\Pi_d(\eta) + \sum_{\Delta=\eta+1}^{\tau-1}\pi_d(\Delta) + \Pi_d(\tau)\right) + \pi_0(0) = 1,
\end{split}
\end{equation}
\begin{equation}\label{eq-ApproximationHeads}
\begin{cases}
&
\begin{split}
\Pi_1(\eta) -\pi_1(1) + \pi_1(\eta+1) = \sum_{d'=1}^{N-1}P_{d',1}\Pi_{d'}(\eta),
\end{split}\\
&
\begin{split}
\Pi_d(\eta) + \sum_{\Delta=\eta+1}^{\eta+d}\pi_d(\Delta) &= \sum_{d'=1}^{N-1}P_{d',d}\Pi_{d'}(\eta)\ \ \ \ \ \ \ \ \ \ \ for\ 2\leq d\leq N-1,
\end{split}\\
\end{cases}
\end{equation}
where $a_{d,\Delta}$ is the action suggested by the threshold policy $\bm{n}$ at state $(d,\Delta)$, $\eta=\min\{\bm{n}\}-1$ and $\rho=\frac{\pi_0(0)}{\sigma_0(0)}$. $\sigma_d(\Delta)$ is the stationary distribution associated with the Markov chain induced by another threshold policy $\bm{n}' = [\eta',...,\eta']$ where $\eta'=\eta+1$.
\end{corollary}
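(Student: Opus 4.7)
The plan is to extend the truncation argument of Proposition \ref{prop-TransmissionRate} by aggregating the bottom of the induced DTMC as well as the top. Set $\eta = \min\{\bm{n}\}-1$; inside the head region $\{(d,\Delta): d\geq 1,\,1\leq \Delta\leq \eta\}$ the threshold policy forces $a=0$, so the chain evolves there purely by the source transitions of Fig.~\ref{fig-EvolutionD}. I would lump each column of the head into an aggregate $\Pi_d(\eta) = \sum_{\Delta=l_d}^{\eta}\pi_d(\Delta)$, keeping as unknowns only $\pi_0(0)$, the head aggregates $\{\Pi_d(\eta)\}$, the per-state probabilities $\{\pi_d(\Delta): \eta+1\leq\Delta\leq\tau-1\}$, and the tail aggregates $\{\Pi_d(\tau)\}$, which yields a finite linear system of size $\mathcal{O}((N-1)(\tau-\eta))$.

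The head balance equations \eqref{eq-ApproximationHeads} would be derived by summing the transmission-free balance $\pi_d(\Delta) = \sum_{d'} P_{d',d}\,\pi_{d'}(\Delta-d)$ over $\max\{2,l_d\}\leq \Delta\leq \eta$. This reduces the LHS to $\Pi_d(\eta)$ (minus the separately-given $\pi_1(1)$ when $d=1$) and the RHS to $\sum_{d'} P_{d',d}\,\Pi_{d'}(\eta)$ minus residual inner sums $\sum_{d'} P_{d',d}\,\pi_{d'}(\eta-j)$ for $j=0,\ldots,d-1$. Each such inner sum equals $\pi_d(\eta+d-j)$ by the middle-region balance at $\Delta=\eta+d-j$ (whose predecessors $\Delta'=\eta-j$ still satisfy $a_{d',\Delta'}=0$), so moving them to the LHS produces \eqref{eq-ApproximationHeads}. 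The remaining ingredients of the system are inherited from Proposition \ref{prop-TransmissionRate}: the middle balances for $\eta+d+1\leq \Delta\leq \tau-1$ and the tail aggregates $\Pi_d(\tau)$ are unchanged, while \eqref{eq-Approximation0}, \eqref{eq-Approximation1}, and \eqref{eq-ApproximationSum} are just \eqref{eq-Balance0}, \eqref{eq-Balance1}, and \eqref{eq-BalanceSum} after regrouping the head slab $\sum_{\Delta=1}^{\eta}\pi_1(\Delta)$ as $\Pi_1(\eta)$ (which is where the extra $p\Pi_1(\eta)$ term in \eqref{eq-Approximation0} comes from).

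What is genuinely new is the first case of \eqref{eq-ApproximationD}, governing $\pi_d(\Delta)$ for $\eta+1\leq \Delta\leq \eta+d$. Its standard balance involves per-state head probabilities $\pi_{d'}(\Delta-d)$ that have been absorbed into the aggregates $\Pi_{d'}(\eta)$ and are therefore no longer available. My plan is to approximate these values using an auxiliary chain induced by the uniform-threshold policy $\bm{n}' = [\eta+1,\ldots,\eta+1]$, whose stationary distribution $\sigma_d(\cdot)$ can be computed independently thanks to its symmetry in $d$. The key observation is that inside the head both chains obey identical, homogeneous, transmission-free linear recurrences driven by the same source transitions, and they receive external inflow only at the states $(0,0)$ and $(1,1)$. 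Hence their head profiles are proportional, and matching them at the regeneration state $(0,0)$ fixes the proportionality constant to $\rho = \pi_0(0)/\sigma_0(0)$. Substituting $\pi_{d'}(\Delta-d)\approx \rho\,\sigma_{d'}(\Delta-d)$ into the middle balance yields the first line of \eqref{eq-ApproximationD}.

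The main obstacle is justifying the proportionality when the thresholds are large. Although the two chains share head dynamics, the injection rate at $(1,1)$ depends on the active mass through the successful-transmission term $2p_sp\,\bar{R}$, which is policy-specific; strict proportionality would require $\bar{R}_{\bm{n}}/\bar{R}_{\bm{n}'} = \pi_0(0)/\sigma_0(0)$, which need not hold exactly. My plan is to argue that as the thresholds grow, the active mass becomes small (bounded above by the power budget $\alpha<1$ and shrinking further for large $\eta$), so the policy-dependent part of the injection into $(1,1)$ becomes subdominant, and the relative error between $\pi_{d'}$ and $\rho\,\sigma_{d'}$ inside the head decays geometrically in $\eta$ through the contracting nature of the source Markov chain. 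Making this quantitative via a perturbation bound on the head profile is where the bulk of the technical effort resides; the statement of the corollary then follows by plugging the approximation back into the middle balance and collecting terms.
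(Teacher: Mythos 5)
Your proposal follows essentially the same route as the paper's proof: aggregate the head region into $\Pi_d(\eta)$ by summing the transmission-free balance equations, keep the middle and tail treatment from Proposition \ref{prop-TransmissionRate}, and resolve the missing per-state head probabilities via proportionality to the auxiliary uniform-threshold chain $\bm{n}'=[\eta+1,\dots,\eta+1]$ with ratio $\rho=\pi_0(0)/\sigma_0(0)$, justified by $\pi_1(1)\approx 2p\pi_0(0)$ when the transmission rate is negligible. Your added remarks on quantifying the error go beyond what the paper does (it only asserts the approximation improves as the thresholds grow), but the core argument is the same.
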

\begin{proof}
When the thresholds in $\bm{n}$ are huge, the expected transmission rate will be insignificant. Combining with \eqref{eq-Balance1}, we have $\pi_1(1) \approx 2p\pi_0(0)$. Consequently, we can show that, for any state $(d,\Delta)$, $\pi_d(\Delta)\approx c^{\bm{n}}_{d,\Delta}\pi_0(0)$ where $c^{\bm{n}}_{d,\Delta}$ is a scalar that depends on the policy and the state. At the same time, we notice that, for any two threshold policies $\bm{n}_1$ and $\bm{n}_2$, the suggested actions at states with $\Delta<\min\{[\bm{n}_1,\bm{n}_2]\}$ are the same. We denote by $G(\bm{n}_1,\bm{n}_2)$ the set of these states. Then, we can prove that, for $(d,\Delta)\in G(\bm{n}_1,\bm{n}_2)$,
\begin{equation}\label{eq-Approx}
\frac{\pi^1_d(\Delta)}{\pi^2_d(\Delta)}\approx\frac{c_{d,\Delta}\pi^1_0(0)}{c_{d,\Delta}\pi^2_0(0)} = \frac{\pi^1_0(0)}{\pi^2_0(0)},
\end{equation}
where $\pi^1_d(\Delta)$ and $\pi^2_d(\Delta)$ are the stationary distributions when $\bm{n}_1$ and $\bm{n}_2$ are adopted, respectively. Based on \eqref{eq-Approx}, we can obtain the two systems of linear equations. The complete proof is in Appendix \ref{proof-RateApproximation}.
\end{proof}
\begin{remark}
For the same reason as in Remark \ref{rmk-overdetermined}, we can exclude \eqref{eq-Approximation0} in practice.
\end{remark}

In Corollary \ref{prop-RateApproximation}, instead of solving a large system of linear equations of size $\mathcal{O}((N-1)\tau)$, we approximate $\bar{R}_{\lambda}$ by solving two systems of linear equations of size $\mathcal{O}((N-1)(\eta+1))$ and $\mathcal{O}((N-1)(\tau-\eta+1))$, respectively. It is worth noting that when $\tau\approx\eta$ or $\tau\gg\eta$, the complexity reduction of Corollary \ref{prop-RateApproximation} is limited. For other cases, Corollary \ref{prop-RateApproximation} can significantly reduce the complexity and the resulting error is negligible. The methodology presented in Proposition \ref{prop-TransmissionRate} can also be applied to the calculation of the expected AoII $\bar{\Delta}_{\bm{n}}$. More precisely, we can use the following corollary.
\begin{corollary}[Expected AoII]\label{prop-ExpectedAoII}
The expected AoII under threshold policy $\bm{n}$ is
\[
\bar{\Delta}_{\bm{n}} = \sum_{d=1}^{N-1}\left(\sum_{\Delta=l_d}^{\tau-1}\omega_d(\Delta) + \Omega_d(\tau)\right),
\]
where $\tau=max\{\bm{n}\}$ and $l_d=\frac{d^2+d}{2}$. $\omega_d(\Delta)$'s and $\Omega_d(\tau)$'s are the solution to the following finite system of linear equations.\\
For each $1\leq d\leq N-1$:
\begin{equation}\label{eq-ExpectedAoIIGeneral}
\begin{cases}
& \omega_0(0) = 0; \omega_d(\Delta) = 0\ \ \ \ \ \ \ \ \ \ \ \ for\ \Delta<l_d,\\
&
\begin{split} 
\omega_d(\Delta) = \Delta\pi_d(\Delta) \ \ \ \ \ \ \ \ \ \ \ \ \ \ \ \ for\ l_d\leq\Delta\leq\tau-1,
\end{split}\\
&
\begin{split}
\Omega_d(\tau) & = \sum_{d'=1}^{N-1}P_{d',d}\Bigg(\sum_{\Delta=\tau-d}^{\tau-1}(1-p_sa_{d',\Delta})\omega_{d'}(\Delta) + p_f\Omega_{d'}(\tau)\Bigg) + d\Pi_d(\tau),\\
\end{split}\\
\end{cases}
\end{equation}
where $a_{d,\Delta}$ is the action suggested by the threshold policy $\bm{n}$ at state $(d,\Delta)$. $\pi_d(\Delta)$'s and $\Pi_d(\tau)$'s can be obtained using Proposition \ref{prop-TransmissionRate} with the same threshold policy $\bm{n}$.
\end{corollary}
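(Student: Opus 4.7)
The plan is to reuse the methodology developed for Proposition \ref{prop-TransmissionRate}. Since the DTMC induced by the threshold policy $\bm{n}$ is ergodic, the expected AoII equals the stationary average
\[
\bar{\Delta}_{\bm{n}}=\sum_{d,\Delta}\Delta\,\pi_d(\Delta),
\]
so once the $\pi_d(\Delta)$'s and the lumped tail probabilities $\Pi_d(\tau)$ are in hand (via Proposition \ref{prop-TransmissionRate}), the remaining task is to reorganize this infinite sum into a finite linear system whose unknowns are the weighted quantities $\omega_d(\Delta)\triangleq\Delta\,\pi_d(\Delta)$ and the weighted tails $\Omega_d(\tau)\triangleq\sum_{\Delta\geq\tau}\Delta\,\pi_d(\Delta)$.

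The first two lines of the recursion \eqref{eq-ExpectedAoIIGeneral} are essentially bookkeeping. First I would observe that $\Delta=0$ forces $d=0$, so the $d=0$ contribution to $\bar{\Delta}_{\bm{n}}$ is zero, giving $\omega_0(0)=0$. Next, the minimum-AoII argument used for $\pi_d(\Delta)$ in Proposition \ref{prop-TransmissionRate} still applies: starting from $(0,0)$, since $d_t$ can change by at most one per slot, reaching a state with mismatch $d$ takes at least $d$ slots during which the accumulated AoII is at least $1+2+\cdots+d=l_d$. Hence $\pi_d(\Delta)=0$, and consequently $\omega_d(\Delta)=0$, for $\Delta<l_d$. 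For $l_d\leq\Delta\leq\tau-1$ the definition $\omega_d(\Delta)=\Delta\,\pi_d(\Delta)$ is immediate.

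The real work is the recursion for $\Omega_d(\tau)$. Starting from the balance equation $\pi_d(\Delta)=\sum_{d'}P_{d',d}(1-p_s a_{d',\Delta-d})\pi_{d'}(\Delta-d)$ of Proposition \ref{prop-TransmissionRate}, I would multiply through by $\Delta$, use the identity $\Delta=(\Delta-d)+d$, and sum over $\Delta\geq\tau$, re-indexing by $\Delta'=\Delta-d$ to obtain
\[
\Omega_d(\tau)=\sum_{d'}P_{d',d}\sum_{\Delta'\geq\tau-d}(1-p_s a_{d',\Delta'})\bigl[\Delta'+d\bigr]\pi_{d'}(\Delta').
\]
Splitting the inner sum at $\Delta'=\tau$ on the $\Delta'$-weighted piece produces the finite head $\sum_{\Delta'=\tau-d}^{\tau-1}(1-p_s a_{d',\Delta'})\omega_{d'}(\Delta')$ plus the tail $p_f\,\Omega_{d'}(\tau)$, where the tail simplification uses that $\Delta'\geq\tau\geq n_{d'}$ forces $a_{d',\Delta'}=1$ and hence $1-p_s a_{d',\Delta'}=p_f$. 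The $d$-weighted piece is, by construction, exactly $d$ times the right-hand side of the $\Pi_d(\tau)$-recursion in Proposition \ref{prop-TransmissionRate}, so it collapses to $d\,\Pi_d(\tau)$. This delivers the claimed equation for $\Omega_d(\tau)$.

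The main obstacle is the re-indexing and case-splitting in the previous step: I must verify that every transition $(d',\Delta')\to(d,\Delta)$ with $\Delta\geq\tau$ is counted exactly once, that the intermediate window $\tau-d\leq\Delta'\leq\tau-1$ is treated with the actual actions $a_{d',\Delta'}$ rather than the simplified $p_f$ (because the threshold structure of $\bm{n}$ still matters there), and that the $d$-weighted tail collapses to $d\,\Pi_d(\tau)$ by precisely the recursion already derived in Proposition \ref{prop-TransmissionRate}. Once this accounting is carried out carefully, the summation-form expression for $\bar{\Delta}_{\bm{n}}$ follows from splitting $\sum_{\Delta}\Delta\,\pi_d(\Delta)=\sum_{\Delta=l_d}^{\tau-1}\omega_d(\Delta)+\Omega_d(\tau)$, and the corollary is complete.
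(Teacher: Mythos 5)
Your proposal is correct and follows essentially the same route as the paper: define $\omega_d(\Delta)=\Delta\pi_d(\Delta)$ and $\Omega_d(\tau)=\sum_{\Delta\geq\tau}\omega_d(\Delta)$, weight the balance equation of Proposition \ref{prop-TransmissionRate} by the age (the paper multiplies by $\Delta-d$ and moves $d\pi_d(\Delta)$ to the left, you multiply by $\Delta=(\Delta-d)+d$, which is the same algebra), then sum over $\Delta\geq\tau$ and split the head window from the $p_f$-tail, with the $d$-weighted piece collapsing to $d\Pi_d(\tau)$. The only cosmetic difference is that the paper obtains $d\Pi_d(\tau)$ directly from the definition of $\Pi_d(\tau)$ rather than by re-invoking its recursion, which is equivalent.
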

\begin{proof}
We define $\omega_d(\Delta) \triangleq \Delta\pi_d(\Delta)$ and $\Omega_d(\tau) \triangleq \sum_{\Delta=\tau}^{+\infty}\omega_d(\Delta).$
Then, we combine the states with $\Delta\geq\tau$ as did in Proposition \ref{prop-TransmissionRate}. After some rearrangements, we can obtain the system of linear equations shown above. The complete proof is in Appendix \ref{proof-ExpectedAoII}.
\end{proof}

\subsection{Optimal Policy}\label{sec-OptimalPolicy}
Till this point, we are able to find the optimal policy for problem \eqref{eq-Minimization}. However, our goal is to find the optimal policy for the constrained problem \eqref{eq-Constrained}. Based on the work in \cite{b15}, the optimal policy for problem \eqref{eq-Constrained} can be expressed as a mixture of two deterministic policies that are both optimal for problem \eqref{eq-Minimization} with $\lambda=\lambda^*$. More precisely, the optimal policy can be summarized in the following theorem.
\begin{theorem}[Optimal policy]\label{theo-Construct}
The optimal policy for the constrained problem \eqref{eq-Constrained} can be expressed as a mixture of two deterministic policies $\bm{n}_{\lambda^*_+}$ and $\bm{n}_{\lambda^*_-}$ that are both optimal for problem \eqref{eq-Minimization} with $\lambda = \lambda^*\triangleq\inf\{\lambda>0:\bar{R}_{\lambda}\leq\alpha\}$. $\bar{R}_{\lambda}$ is the expected transmission rate resulting from policy $\bm{n}_{\lambda}$. More precisely, if we choose
\begin{equation}\label{eq-Mu}
\mu = \frac{\alpha - \bar{R}_{\lambda^*_+}}{\bar{R}_{\lambda^*_-}-\bar{R}_{\lambda^*_+}},
\end{equation}
the mixed policy $\bm{n}_{\lambda^*}$, which selects $\bm{n}_{\lambda^*_-}$ with probability $\mu$ and $\bm{n}_{\lambda^*_+}$ with probability $1-\mu$ each time the system reaches state $(0,0)$, is optimal for the constrained problem \eqref{eq-Constrained} and the constraint in \eqref{eq-ObjectConstraint} is met with equality.
\end{theorem}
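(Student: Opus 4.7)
The plan is to invoke weak Lagrangian duality and then exhibit a primal (possibly randomized) policy that attains the dual bound. For any feasible $\phi$ and any $\lambda\geq 0$,
\[
\bar{\Delta}_\phi \;=\; \mathcal{L}(\phi,\lambda) + \lambda\alpha - \lambda\bar{R}_\phi \;\geq\; \mathcal{G}(\lambda) + \lambda(\alpha-\bar{R}_\phi) \;\geq\; \mathcal{G}(\lambda),
\]
so $\min_{\phi\in\Phi}\bar{\Delta}_\phi\geq\sup_{\lambda\geq 0}\mathcal{G}(\lambda)$. It therefore suffices to produce $\lambda^*$ and a policy $\bm{n}_{\lambda^*}$ with $\bar{R}_{\bm{n}_{\lambda^*}}=\alpha$ that also achieves $\mathcal{G}(\lambda^*)$; both inequalities then collapse to equalities and the constraint binds.

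First I would establish that the rate function $\lambda\mapsto\bar{R}_\lambda$ is non-increasing. The standard interchange argument does this: for $\lambda_1<\lambda_2$ with respective Lagrangian minimizers $\bm{n}_{\lambda_1},\bm{n}_{\lambda_2}$, summing the two optimality inequalities yields $(\lambda_2-\lambda_1)(\bar{R}_{\bm{n}_{\lambda_2}}-\bar{R}_{\bm{n}_{\lambda_1}})\leq 0$. Because Proposition \ref{prop-ThresholdPolicy} restricts optimal policies to the countable family of integer-threshold vectors, $\bar{R}_\lambda$ is in fact a non-increasing step function of $\lambda$ that decays to $0$ as $\lambda\to\infty$; assuming the unconstrained optimum at $\lambda=0$ violates the constraint (otherwise the problem reduces to the pure MDP and a pure threshold policy already suffices), the quantity $\lambda^*=\inf\{\lambda>0:\bar{R}_\lambda\leq\alpha\}$ is finite and well-defined. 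Either $\bar{R}_{\lambda^*}=\alpha$ (trivial case) or $\bar{R}_\lambda$ jumps across $\alpha$ at $\lambda^*$; in the jump case the two one-sided limits define threshold policies $\bm{n}_{\lambda^*_-}$ and $\bm{n}_{\lambda^*_+}$ satisfying $\bar{R}_{\bm{n}_{\lambda^*_+}}\leq\alpha\leq\bar{R}_{\bm{n}_{\lambda^*_-}}$, and both are Lagrangian-optimal at $\lambda^*$ (both attain $\mathcal{G}(\lambda^*)$) by the continuity of $\mathcal{L}(\bm{n},\lambda)$ in $\lambda$ for fixed $\bm{n}$.

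Next I would realise a convex combination of the two stationary averages by randomising between the two policies at the renewal state $(0,0)$. Proposition \ref{prop-ThresholdPolicy} guarantees that $(0,0)$ is inactive under every Lagrangian-optimal threshold policy, so the two candidates $\bm{n}_{\lambda^*_\pm}$ agree at $(0,0)$ and it is a common recurrence point. Under each threshold policy the induced chain is irreducible on its reachable set, and the truncation argument of Theorem \ref{theo-ASMConvergence} together with $p\in[0,1/3]$ ensures positive recurrence of $(0,0)$ with finite expected return time. Consecutive returns to $(0,0)$ therefore form regenerative cycles, and selecting $\bm{n}_{\lambda^*_-}$ with probability $\mu$ and $\bm{n}_{\lambda^*_+}$ with probability $1-\mu$ independently at each return produces a renewal-reward process whose time averages equal the $\mu$-weighted averages of the pure-policy quantities. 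Picking $\mu$ as in \eqref{eq-Mu} forces $\bar{R}_{\bm{n}_{\lambda^*}}=\alpha$, and the same convex-combination identity applied to $\mathcal{L}(\cdot,\lambda^*)$ gives $\mathcal{L}(\bm{n}_{\lambda^*},\lambda^*)=\mathcal{G}(\lambda^*)$ since both pure policies do. Plugging back into the opening display yields $\bar{\Delta}_{\bm{n}_{\lambda^*}}=\mathcal{G}(\lambda^*)\leq\bar{\Delta}_\phi$ for every feasible $\phi$, which is the optimality claim.

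The main obstacle I anticipate is justifying the renewal-reward step rigorously: one must confirm that a policy which switches its decision rule only at visits to $(0,0)$ still possesses the same time-average costs as the intuitive convex combination. This reduces to checking (i) that $(0,0)$ is a positive-recurrent atom under both pure policies with a finite-mean return time, (ii) that the per-cycle accumulated AoII has finite mean, and (iii) that the strong law of large numbers for renewal-reward processes applies. Items (i)--(ii) I would extract from the drift properties of the truncated chain in Theorem \ref{theo-ASMConvergence} and from bounding the per-cycle AoII by a geometric tail once $\Delta$ exceeds the maximum threshold in $\bm{n}_{\lambda^*_-}$; item (iii) is then standard. Once these are in hand, the remainder of the argument is bookkeeping with the Lagrangian identity above.
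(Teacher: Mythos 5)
Your overall architecture differs from the paper's: you build a self-contained argument from weak Lagrangian duality plus a renewal--reward construction, whereas the paper verifies the five assumptions of Sennott's constrained-MDP framework \cite{b15} and then imports Proposition 3.2 and Lemmas 3.4, 3.7, 3.9, 3.10 wholesale (in particular, the existence and $\lambda^*$-optimality of the two limiting policies $\bm{n}_{\lambda^*_\pm}$ and the validity of mixing at the recurrent state are all delegated to that reference). Your route is more transparent about \emph{why} the construction works, but it silently re-derives exactly the technical content that \cite{b15} packages — e.g., your claim that both one-sided limits are Lagrangian-optimal at $\lambda^*$ ``by continuity of $\mathcal{L}(\bm{n},\lambda)$'' needs the optimal threshold vectors to stabilize (remain bounded) as $\lambda\downarrow\lambda^*$, which is Sennott's Lemmas 3.4 and 3.7 and is not automatic in an infinite-state MDP merely because thresholds are integers.

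The concrete gap is in the renewal--reward step. If the policy is re-drawn independently at each visit to $(0,0)$, the renewal--reward theorem gives the long-run transmission rate as a ratio of expectations,
\begin{equation*}
\frac{\mu\,\mathbb{E}[N_-]+(1-\mu)\,\mathbb{E}[N_+]}{\mu\,\mathbb{E}[L_-]+(1-\mu)\,\mathbb{E}[L_+]}
=\frac{\mu\,\bar{R}_{\lambda^*_-}\mathbb{E}[L_-]+(1-\mu)\,\bar{R}_{\lambda^*_+}\mathbb{E}[L_+]}{\mu\,\mathbb{E}[L_-]+(1-\mu)\,\mathbb{E}[L_+]},
\end{equation*}
where $N_\pm$ and $L_\pm$ are the per-cycle transmission count and cycle length under each pure policy. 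This is a convex combination of $\bar{R}_{\lambda^*_-}$ and $\bar{R}_{\lambda^*_+}$ with weights proportional to $\mu\,\mathbb{E}[L_-]$ and $(1-\mu)\,\mathbb{E}[L_+]$, \emph{not} with weights $\mu$ and $1-\mu$; so your assertion that the time averages ``equal the $\mu$-weighted averages of the pure-policy quantities'' is false unless $\mathbb{E}[L_-]=\mathbb{E}[L_+]$, and plugging \eqref{eq-Mu} into your construction does not by itself yield rate exactly $\alpha$. What does survive is (i) the mixture is Lagrangian-optimal for \emph{every} $\mu$, since $\mathbb{E}[C^{\mathrm{cycle}}_\pm]=g^*\mathbb{E}[L_\pm]$ for both pure policies makes the ratio collapse to $g^*$, and (ii) the rate is a continuous monotone function of $\mu$ sweeping from $\bar{R}_{\lambda^*_+}$ to $\bar{R}_{\lambda^*_-}$, so some $\mu$ attains $\alpha$ and your duality bookkeeping then closes. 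You should either prove the theorem with that intermediate-value choice of $\mu$ (and note the discrepancy with \eqref{eq-Mu}), or switch to the ``randomize once at time $0$'' interpretation of mixing, under which the expected time-average is genuinely $\mu$-linear and \eqref{eq-Mu} is exact.
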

\begin{proof}
We verify that our system satisfies all the assumptions given in \cite{b15}. Then, combining the characteristics of our system and the results in \cite{b15}, we obtain the optimal policy. The complete proof is in Appendix \ref{proof-Construct}.
\end{proof}
Next, we describe an efficient algorithm to obtain the optimal policy for the constrained problem \eqref{eq-Constrained}. The core of obtaining the optimal policy is to find $\lambda^*$. We recall that, for any given $\lambda$, the deterministic policy $\bm{n}_{\lambda}$ is obtained by applying the improved \textit{RVI} and the resulting $\bar{R}_{\lambda}$, which is non-increasing in $\lambda$ \cite{b15}, is calculated using Proposition \ref{prop-TransmissionRate}. Hence, $\bar{R}_{\lambda}$ can be regarded as a non-increasing function of $\lambda$, and we can use \textit{Bisection search} with tolerance $\xi$ to find $\lambda^*$. Then, $\lambda^*_+$ and $\lambda^*_-$ can be the boundaries of the final interval. More precisely, we initialize $\lambda_-=0$ and $\lambda_+=1$. Then, the procedure can be summarized as follows.
\begin{itemize}
\item As long as $\bar{R}_{\lambda_+}\geq\alpha$, we set $\lambda_-=\lambda_+$ and $\lambda_+=2\lambda_+$. Then, we end up with an interval $I = [\lambda_-,\lambda_+]$.
\item We apply \textit{Bisection Search} on the interval $I$ until the length of $I$ is less than the tolerance $\xi$. Then, the algorithm returns $\lambda^*_+$ and $\lambda^*_-$.
\end{itemize}
The pseudocode is given in Algorithm \ref{alg-BisectionSearch} of Appendix \ref{sec-Algorithm}. Finally, the mixing coefficient $\mu$ is calculated using \eqref{eq-Mu} and the resulting expected AoII is calculated using Corollary \ref{prop-ExpectedAoII}. The algorithm is efficient for the following reasons.
\begin{itemize}
\item We obtain $\bm{n}_{\lambda}$ using the improved \textit{RVI} which avoids minimum operators at certain states.
\item When calculating $\bar{R}_{\lambda}$, we cast the induced infinite-state Markov chain to a finite-state Markov chain. 
\item We find $\lambda^*_+$ and $\lambda^*_-$ using \textit{Bisection search} which has a logarithmic complexity.
\end{itemize}

\section{Numerical Results}\label{sec-NumericalResults}
In this section, we provide numerical results that accent the effect of system parameters on the performance of AoII-optimal policy. We also compare the AoII-optimal policy with the AoI-optimal policy derived in \cite{b21}.
\begin{figure*}%
\centering
\begin{subfigure}{.5\columnwidth}
\includegraphics[width=\columnwidth]{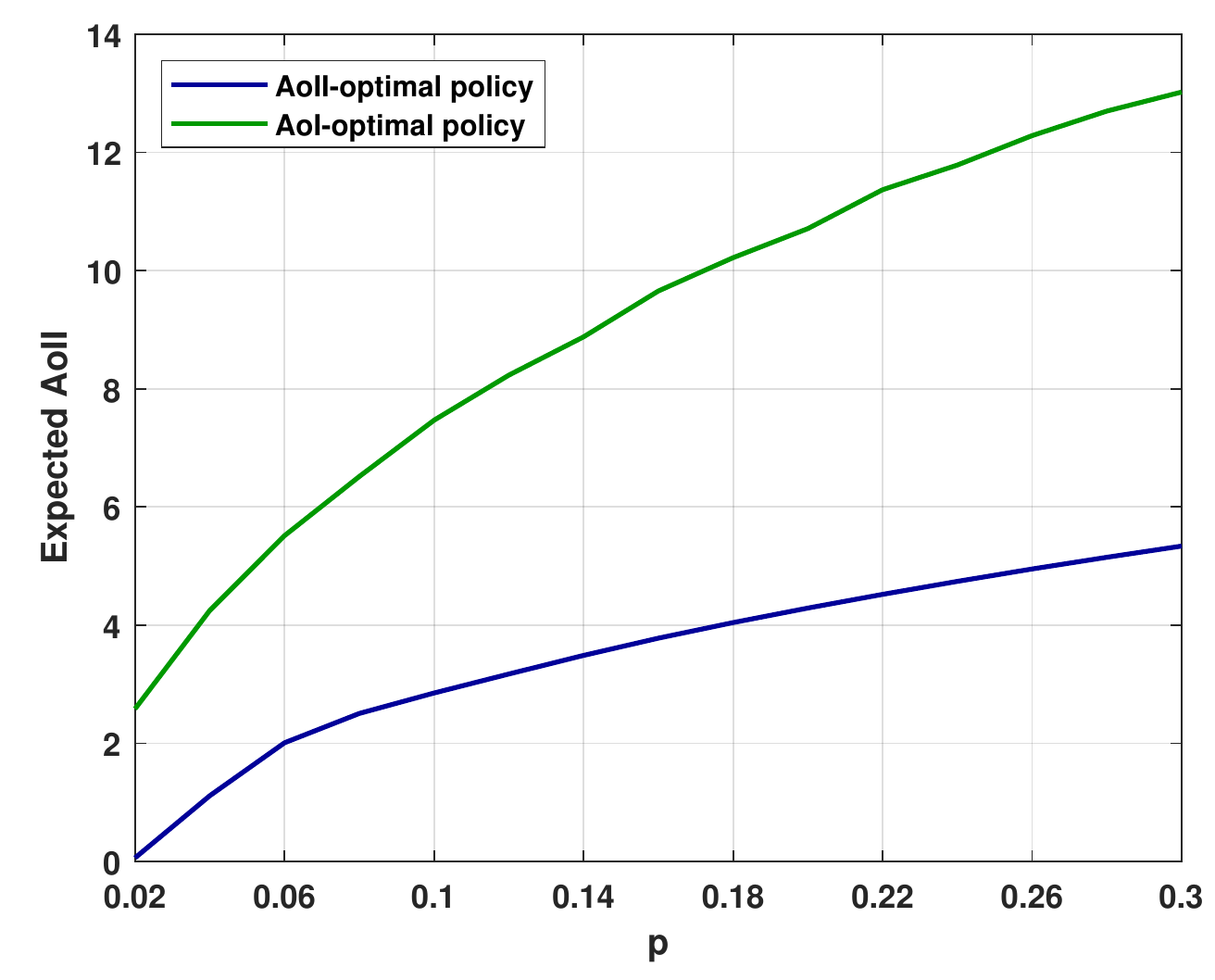}%
\caption{The expected AoII in function of $p$.}%
\label{fig-AoIIP}%
\end{subfigure}\hfill%
\begin{subfigure}{.5\columnwidth}
\includegraphics[width=\columnwidth]{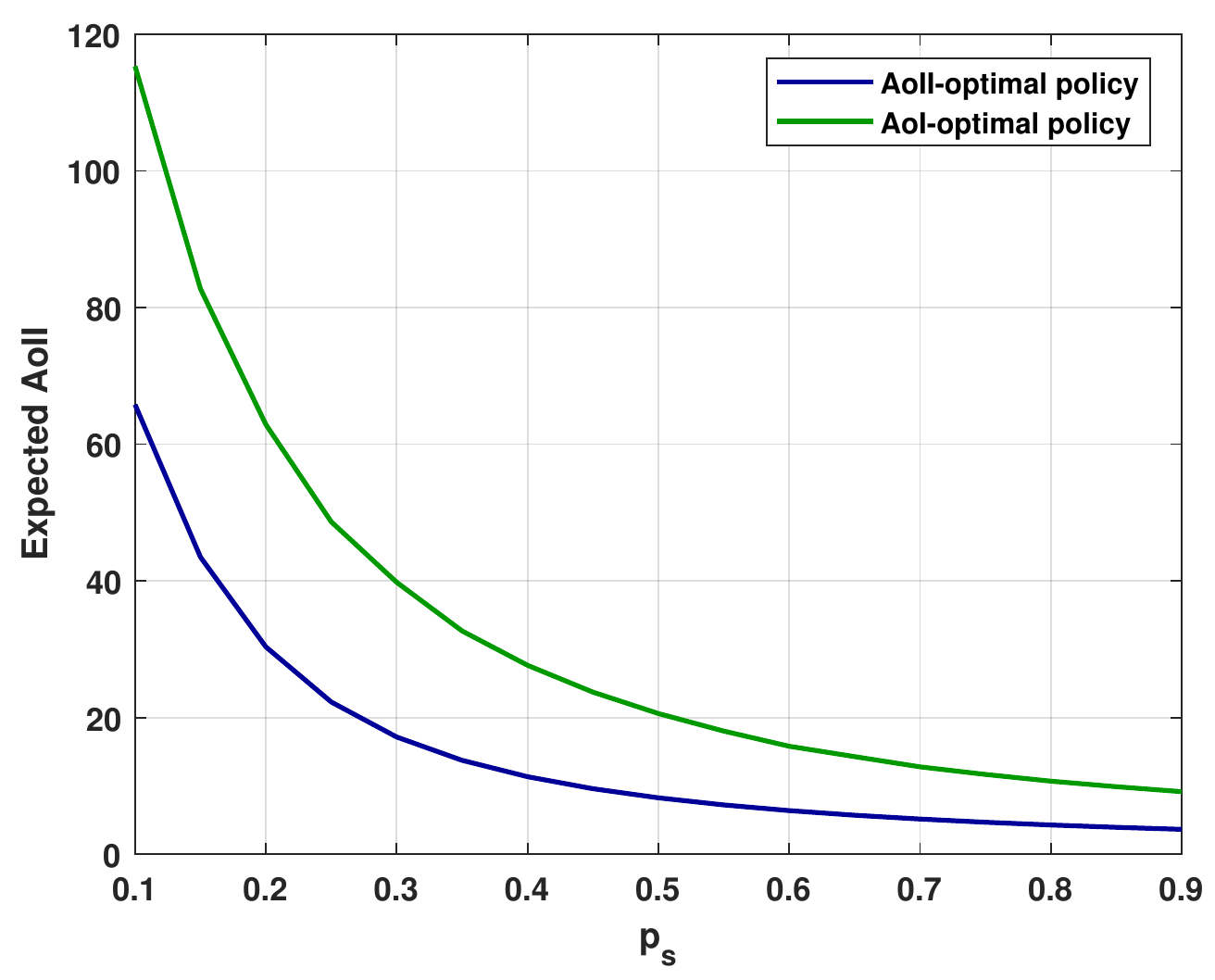}%
\caption{The expected AoII in function of $p_s$.}%
\label{fig-AoIIPs}%
\end{subfigure}\hfill%
\begin{subfigure}{.5\columnwidth}
\includegraphics[width=\columnwidth]{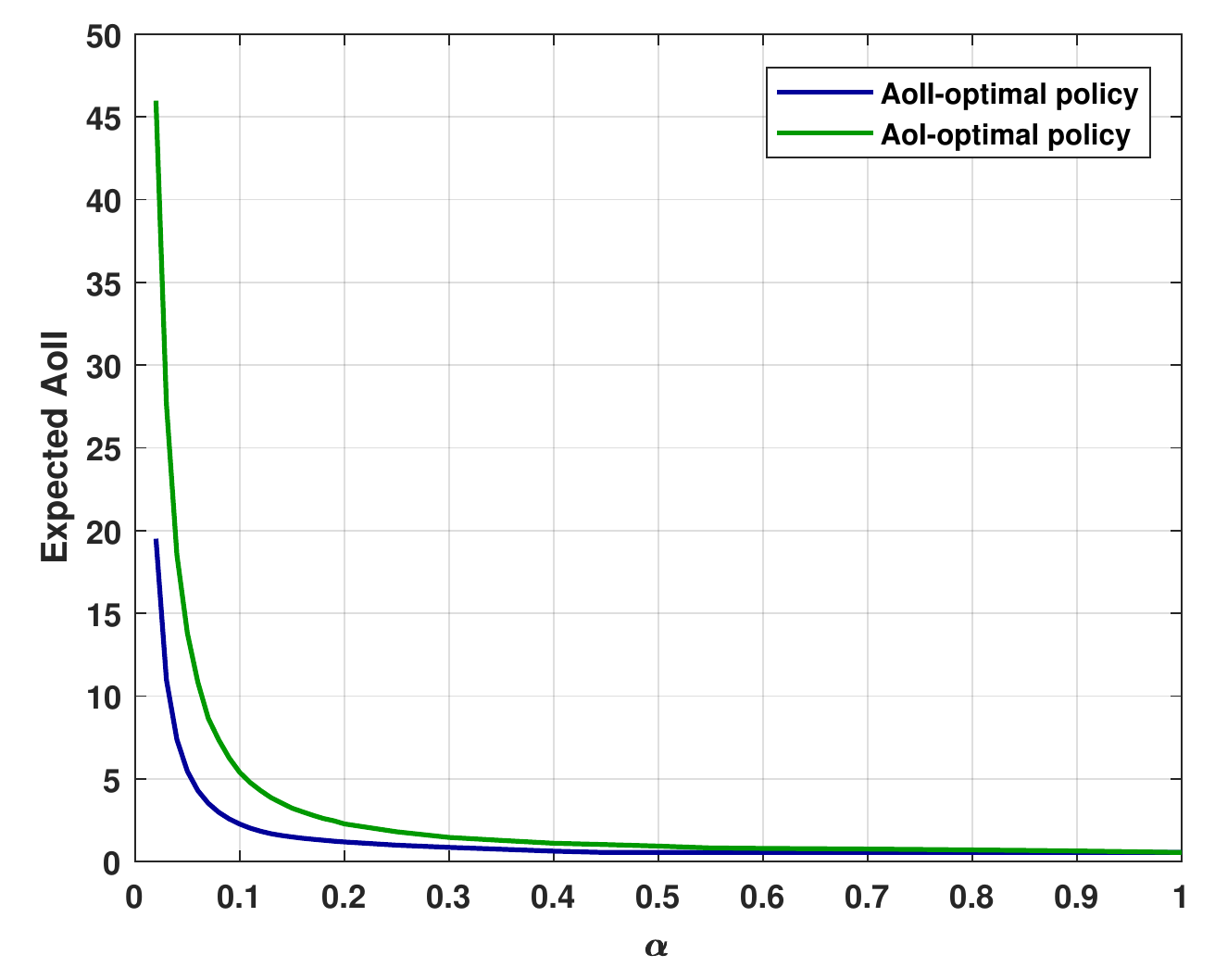}%
\caption{The expected AoII in function of $\alpha$.}%
\label{fig-AoIIAlpha}%
\end{subfigure}%
\caption{Illustrations of AoII-optimal policy and AoI-optimal policy. The truncation parameter in \textit{ASM} $m=800$ and the tolerance in \textit{Bisection search} $\xi=0.01$. \textit{RVI} converges when the maximum difference between the results of two consecutive iterations is less than $\epsilon = 0.01$.}
\label{fig-AoIIPerformance}
\end{figure*}

\paragraph{Effect of $p$}
We compare the performances of AoII-optimal policies under different values of $p$. To this end, we fix $N=7$ and $p_s=0.8$. We also set $\alpha=0.06$. We vary the value of $p$ and plot the corresponding results. As we can see in Fig. \ref{fig-AoIIP}, the expected AoII is increasing in $p$. To explain this trend, we notice that as $p$ increases, the source process will be more inclined to change state at the next time slot. Then, those successfully transmitted updates will more likely be obsolete at the next time slot. As the power budget $\alpha$ is fixed which dictates the transmission rate, the expected AoII will increase as $p$ increases.

We also show, in Table \ref{tab-PolicyP}, the deterministic policies $\bm{n}_{\lambda_+^*}$, $\bm{n}_{\lambda_-^*}$ and the corresponding mixing coefficient $\mu$ for some values of $p$.\footnote{In the table, " / " indicates the threshold where the two policies differ.} The optimal policy is the mixture of $\bm{n}_{\lambda_+^*}$ and $\bm{n}_{\lambda_-^*}$ with mixing coefficient $\mu$ as described in Theorem \ref{theo-Construct}. We can see that the thresholds are, in general, increasing in $p$. The reason behind this is as follows. When $p$ is small, the successfully transmitted updates are more likely still accurate in the next few time slots. In another word, the transmission is more "efficient". We refer a transmission as "efficient" if it reduces the age to the greatest extent. This allows the transmitter to make transmission attempts when the age is relatively low without violating the power constraint. 
\begin{table}[htbp]
\renewcommand{\arraystretch}{1.2}
\centering
\caption{Optimal thresholds for different $p$}
\begin{tabular}{c|c|c|c|c|c|c|c}
    \hline
    & Mixing Coef. & $n_1$ & $n_2$ & $n_3$ & $n_4$ & $n_5$ & $n_6$\\
    \hline
    \hline
    $p=0.1$ & $\mu=0.7176$ & 15 & 6/7 & 1 & 1 & 1 & 1\\ 
    \hline
    $p=0.2$ & $\mu=0.0331$ & 37 & 16 & 8/9 & 1 & 1 & 1\\ 
    \hline
    $p=0.3$ & $\mu=0.1178$ & 69 & 25/26 & 15 & 1 & 1 & 1\\ 
    \hline
\end{tabular}
\label{tab-PolicyP}
\end{table}

\paragraph{Effect of $p_s$}
In this scenario, we fix $p=0.2$ and investigate the effect of channel reliability $p_s$ on the performance of AoII-optimal policy. We still consider the case of $N=7$ and $\alpha=0.06$. The corresponding results are shown in Fig. \ref{fig-AoIIPs}. As $p_s$ increases, the expected AoII will decrease. The reason is as follows. As $p_s$ increases, the transmitted updates will more likely be successful. Consequently, the transmission will be more "efficient". As the power budget is fixed, the expected AoII will decrease as $p_s$ increases. We also present some selected thresholds in Table \ref{tab-PolicyPs}.
\begin{table}[htbp]
\renewcommand{\arraystretch}{1.2}
\centering
\caption{Optimal thresholds for different $p_s$}
\begin{tabular}{c|c|c|c|c|c|c|c}
    \hline
    & Mixing Coef. & $n_1$ & $n_2$ & $n_3$ & $n_4$ & $n_5$ & $n_6$\\
    \hline
    \hline
    $p_s=0.2$ & $\mu=0.6712$ & 556 & 228 & 140 & 96 & 70/71 & 60\\ 
    \hline
    $p_s=0.4$ & $\mu=0.3260$ & 151 & 62 & 36/37 & 24 & 17 & 1\\ 
    \hline
    $p_s=0.6$ & $\mu=0.4089$ & 67 & 27/28 & 16 & 1 & 1 & 1\\ 
    \hline
    $p_s=0.8$ & $\mu=0.0331$ & 37 & 16 & 8/9 & 1 & 1 & 1\\ 
    \hline
\end{tabular}
\label{tab-PolicyPs}
\end{table}
As we see in the table, the thresholds are, in general, decreasing in $p_s$. We recall that as $p_s$ increases, the transmission will be more "efficient". Thus, the transmitter can make transmission attempts when the age is relatively low while keeping the transmission rate not exceeding the power budget.

\paragraph{Effect of $\alpha$}
Then, we analyze the performances of AoII-optimal policies under different values of $\alpha$. We adopt $N=7$ and $p_s=0.8$. We also set $p=0.2$. The expected AoII achieved by the AoII-optimal policies are plotted in Fig. \ref{fig-AoIIAlpha}. As we see, the expected AoII decreases as $\alpha$ increases. The reason is simple. As the power budget increases, more transmission attempts are allowed. We recall that we impose a transmission attempt to always help reduce the age. Keeping this in mind, we can conclude that the expected AoII is decreasing in $\alpha$. It is worth noting that as $\alpha$ increases, the expected AoII will stop decreasing before $\alpha=1$. To explain this, we recall that the transmitter will never make transmission attempt at state $(0,0)$ if an optimal policy is adopted. Thus, as $\alpha$ becomes large, the transmitter will have enough budget to make transmission attempts at any states other than $(0,0)$. Then, the transmission rate is saturated, and the expected AoII will not decrease further.

\paragraph{Comparison with AoI-optimal policy}
Lastly, we compare the AoII-optimal policy with the AoI-optimal policy. From Fig. \ref{fig-AoIIPerformance}, we can see that the performance gap expands with the increase in $p$ and the decrease in $p_s$ and $\alpha$. The reason behind it lies in the value of transmission attempts. As $p$ increases, the source process becomes more inclined to change states. Therefore, transmission attempts are more needed to bring correct information to the receiver. As $p_s$ decreases, the number of successful transmissions will decrease, and the AoII will build up faster. In this case, transmission attempts are more valuable because they will greatly reduce AoII once they succeed. As $\alpha$ decreases, transmission attempts will be more valuable as fewer attempts are allowed. Due to the different definitions of age, there are often cases where AoI is large, but AoII is small. In these cases, the AoI-optimal policy will waste valuable transmission attempts. Therefore, the increase in the value of transmission attempts will lead to an expansion of the performance gap.

\section{Conclusion}
In this paper, we consider a system where the source process is modeled by an N-state Markov chain. The AoII which considers the quantified mismatch between the source and the knowledge at the receiver is used. We study the problem of minimizing the AoII subject to a power constraint. By casting the problem into a \textit{CMDP}, we can prove that the optimal policy is a mixture of two deterministic threshold policies. Then, an efficient algorithm is proposed to find such policies and the mixing coefficient. Lastly, numerical results are provided to illustrate the performance of the AoII-optimal policy and compare it with the AoI-optimal policy.

\bibliographystyle{IEEEtran}
\bibliography{mybib}

\appendices
\section{Proof of Lemma \ref{le-IncreasingV}}\label{proof-IncreasingV}
To better distinguish between different states of the system, we denote by $V_\nu(x_d,x_{\Delta})$ the estimated value function of state $x$ at iteration $\nu$. To show the desired results, it is sufficient to prove that, at any iteration $\nu>0$, the following holds
\begin{subequations}\label{eq-IterativeRelop}
    \begin{empheq}[left=\empheqlbrace\,]{align}
      & V_\nu(d,\Delta_1)> V_\nu(d,\Delta_2) \ \ \ \forall\ \Delta_1>\Delta_2\geq0,\label{eq-InterativeDelta}\\
      & V_\nu(d_1,\Delta)> V_\nu(d_2,\Delta) \ \ \ \forall\ N-1\geq d_1>d_2\geq0.\label{eq-InterativeD}
    \end{empheq}
\end{subequations}
Leveraging the iterative nature of \textit{RVI}, we use induction to prove the desired results. Without loss of generality, we choose $x=(0,0)$ as the reference state. Since we initialize $V_0(d,\Delta)=\Delta$, \eqref{eq-IterativeRelop} holds when $\nu=0$. We suppose it holds up till iteration $\nu=t$ and examine whether it still holds at iteration $\nu=t+1$.

We first notice that the transition probabilities which dictate the structure of Bellman update depend only on $d$. Combining with the monotonic property of $V_{\nu}(\cdot)$, we conclude that \eqref{eq-InterativeDelta} holds at iteration $\nu+1$.

We next show the relationship between $V_{\nu+1}(d_1,\Delta)$ and $V_{\nu+1}(d_2,\Delta)$. To this end, we define $V_{\nu+1}^0(\cdot)$ and $V_{\nu+1}^1(\cdot)$ as the estimated value function if action $a=0$ and $a=1$ is chosen, respectively. Hence, we can combine and rewrite the Bellman update reported in \eqref{eq-BellmanRelative} and \eqref{eq-BellmanUpdate} as follows.
\begin{equation}\label{eq-BellmanAlternative}
V_{\nu+1}(d,\Delta) = \min\left\{V_{\nu+1}^0(d,\Delta),V_{\nu+1}^1(d,\Delta)\right\},
\end{equation}
where $V_{\nu+1}^a(d,\Delta)$ is calculated by
\begin{equation}\label{eq-UpdateValueAction}
\begin{split}
V_{\nu+1}^a(d,\Delta) = & \Delta + \lambda a + (1-p_sa)\sum_{d'=0}^{N-1}P_{d,d'}V_{\nu}(d',\Delta') +\\
& ap_s\big((1-2p)V_{\nu}(0,0) + 2pV_{\nu}(1,1)\big)-Q_{\nu+1}(x^{ref}),
\end{split}
\end{equation}
where $\Delta'=\mathbbm{1}_{\{d'\neq0\}}\times (\Delta+d')$ and $P_{d,d'}$ is specified in \eqref{eq-TransitionPro}. With this in mind, we divide our discussion into the following cases.
\begin{itemize}
	\item \textit{$d_1 = 1$ and $d_2=0$}: According to \eqref{eq-AoII}, $\Delta=0$ if and only if $d=0$. Then, we only need to compare $V_{\nu+1}(1,\Delta)$ with $V_{\nu+1}(0,0)$. Applying \eqref{eq-TransitionPro} to \eqref{eq-UpdateValueAction}, we arrive at the following results.
\begin{equation*}
V_{\nu+1}^0(1,\Delta)-V_{\nu+1}^0(0,0) =\Delta + \kappa_1,
\end{equation*}
\begin{equation*}
V_{\nu+1}^1(1,\Delta)-V_{\nu+1}^1(0,0) = \Delta+p_f\kappa_1,
\end{equation*}
where $\Delta>0$ and
\begin{equation*}
\begin{split}
\kappa_1 = & (1-3p)[V_{\nu}(1,\Delta+1)-V_{\nu}(0,0)]+p[V_{\nu}(2,\Delta+2)-V_{\nu}(1,1)]+\\
& p[V_{\nu}(1,\Delta+1) - V_{\nu}(1,1)].
\end{split}
\end{equation*}
	\item \textit{$d_1=2$ and $d_2=1$}: We need to compare $V_{\nu+1}(2,\Delta)$ with $V_{\nu+1}(1,\Delta)$. Following the same trajectory, we have the following.
	\begin{equation*}
	V_{\nu+1}^0(2,\Delta)-V_{\nu+1}^0(1,\Delta) = \kappa_2,
	\end{equation*}
	\begin{equation*}
	V_{\nu+1}^1(2,\Delta)-V_{\nu+1}^1(1,\Delta) = p_f\kappa_2,
	\end{equation*}
	where
	\begin{equation*}
	\begin{split}
	\kappa_2 = & p[V_{\nu}(1,\Delta+1)-V_{\nu}(0,0)]+p[V_{\nu}(3,\Delta+3)-V_{\nu}(2,\Delta+2)] +\\
	& (1-2p)[V_{\nu}(2,\Delta+2)-V_{\nu}(1,\Delta+1)].
	\end{split}
	\end{equation*}
	\item \textit{$2\leq d_2<d_1\leq N-2$}: We need to compare $V_{\nu+1}(d_1,\Delta)$ with $V_{\nu+1}(d_2,\Delta)$. Following again the same trajectory, we have the following.
	\begin{equation*}
	V_{\nu+1}^0(d_1,\Delta) - V_{\nu+1}^0(d_2,\Delta) = \kappa_3,
	\end{equation*}
	\begin{equation*}
	V_{\nu+1}^1(d_1,\Delta) - V_{\nu+1}^1(d_2,\Delta) = p_f\kappa_3,
	\end{equation*}
	where
	\begin{equation*}
	\begin{split}
	\kappa_3 = & (1-2p)[V_{\nu}(d_1,\Delta+d_1)-V_{\nu}(d_2,\Delta+d_2)]+\\
	& p[V_{\nu}(d_1-1,\Delta+d_1-1)-V_{\nu}(d_2-1,\Delta+d_2-1)]+\\
	& p[V_{\nu}(d_1+1,\Delta+d_1+1)-V_{\nu}(d_2+1,\Delta+d_2+1)].
	\end{split}
	\end{equation*}
	\item \textit{$d_1=N-1$ and $d_2=N-2$}: We need to compare $V_{\nu+1}(N-1,\Delta)$ with $V_{\nu+1}(N-2,\Delta)$. Following again the same trajectory, we have the following.
	\begin{equation*}
	V_{\nu+1}^0(N-1,\Delta) - V_{\nu+1}^0(N-2,\Delta) = \kappa_4,
	\end{equation*}
	\begin{equation*}
	V_{\nu+1}^1(N-1,\Delta) - V_{\nu+1}^1(N-2,\Delta) = p_f\kappa_4,
	\end{equation*}
	where
	\begin{equation*}
	\begin{split}
	\kappa_4 = & p[V_{\nu}(N-2,\Delta+N-2)-V_{\nu}(N-3,\Delta+N-3)] + \\
	& (1-3p)[V_{\nu}(N-1,\Delta+N-1)-V_{\nu}(N-2,\Delta+N-2)].
	\end{split}
	\end{equation*}
\end{itemize}
Baring in mind the monotonicity of $V_{\nu}(d,\Delta)$ and $p\in[0,\frac{1}{3}]$, we can easily see that $\kappa_1$, $\kappa_2$, $\kappa_3$, and $\kappa_4$ are all positive. Since the estimated value function is updated following \eqref{eq-BellmanAlternative}, we can easily verify that \eqref{eq-InterativeD} holds at iteration $t+1$ which concludes our proof.

\section{Proof of Proposition \ref{prop-ThresholdPolicy}}\label{proof-OptimalPolicy}
We continue with the same notations as in the proof of Lemma \ref{le-IncreasingV}. We recall that \textit{RVI} is an iterative algorithm and the estimated value function will converge to the value function. Hence, it is sufficient to show that the properties hold for the optimal policy at any iteration of \textit{RVI}.

We define $\delta V_{\nu}(d,\Delta) = V_{\nu}^1(d,\Delta) - V_{\nu}^0(d,\Delta)$.  Without loss of generality, we assume $t>0$. Then, the optimal action at iteration $\nu$ is captured by the sign of $\delta V_{\nu}(d,\Delta)$. More precisely, the optimal action $a_t^*=1$ if $\delta V_{\nu}(d,\Delta)\leq0$ and $a_t^*=0$ otherwise. Then, we can prove the following lemma.
\begin{lemma}\label{le-DeltaV1}
$\delta V_{\nu}(d,\Delta)$ is decreasing in $\Delta$ when $d\neq0$ and $t>0$.
\end{lemma}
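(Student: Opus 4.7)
The plan is to compute $\delta V_{\nu}(d,\Delta)$ in closed form using \eqref{eq-UpdateValueAction}, isolate its $\Delta$-dependence, and then invoke Lemma \ref{le-IncreasingV}. Subtracting $V_{\nu}^0(d,\Delta)$ from $V_{\nu}^1(d,\Delta)$, the common summand $\Delta$ cancels, as does the reference-state term $Q_{\nu}(x^{ref})$, leaving
\[
\delta V_{\nu}(d,\Delta) \;=\; \lambda \;-\; p_s\sum_{d'=0}^{N-1} P_{d,d'}\,V_{\nu-1}(d',\Delta') \;+\; p_s\bigl[(1-2p)V_{\nu-1}(0,0) + 2p\,V_{\nu-1}(1,1)\bigr],
\]
where $\Delta' = \mathbbm{1}_{\{d'\neq 0\}}(\Delta+d')$ as in \eqref{eq-UpdateValueAction}. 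Of the three terms on the right, only the middle sum can possibly depend on $\Delta$, and it enters with a negative coefficient. Hence it suffices to show that $S_{\nu-1}(d,\Delta) \triangleq \sum_{d'=0}^{N-1} P_{d,d'}\,V_{\nu-1}(d',\Delta')$ is non-decreasing in $\Delta$ whenever $d\neq 0$.

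Next, I would split $S_{\nu-1}(d,\Delta)$ according to whether $d'=0$ or $d'\neq 0$. The $d'=0$ contribution equals $P_{d,0}\,V_{\nu-1}(0,0)$ because $\Delta'=0$ in that case, and is therefore a constant in $\Delta$; note that by \eqref{eq-TransitionPro} this contribution is zero unless $d=1$. For every term with $d'\neq 0$, we have $\Delta'=\Delta+d'$, and Lemma \ref{le-IncreasingV} guarantees that $V_{\nu-1}(d',\Delta+d')$ is non-decreasing in $\Delta$. Since the transition probabilities $P_{d,d'}$ are non-negative, the weighted sum is non-decreasing in $\Delta$, so $\delta V_{\nu}(d,\Delta)$ is non-increasing in $\Delta$ for every $d\neq 0$. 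One small bookkeeping point to confirm along the way: the boundary case $d=N-1$ only changes \emph{which} non-zero $d'$'s appear (namely $d-1$ and $d$), not the sign of any term, so the argument carries through unchanged.

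The main obstacle is simply ensuring that the only transitions landing on $d'=0$ contribute a $\Delta$-independent term, which is exactly why the hypothesis $d\neq 0$ is used: if $d=0$, the self-transition $d'=0$ with probability $1-2p$ would still give $\Delta'=0$ and thus be constant, but the overall argument is unaffected — however, starting from $d=0$ one typically has $\Delta=0$ anyway (since $d_t=0\Rightarrow \Delta_t=0$), so the statement is naturally restricted. No other delicate estimates are required; the proof is essentially a one-line identity followed by monotonicity from Lemma \ref{le-IncreasingV}.
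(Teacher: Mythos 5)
Your proof is correct and rests on the same two ingredients as the paper's: the closed-form expression for $\delta V_{\nu}(d,\Delta)$ obtained from \eqref{eq-UpdateValueAction} (your displayed identity is exactly right, with the $\Delta$ and reference terms cancelling), and the monotonicity of $V_{\nu-1}$ in $\Delta$ from Lemma \ref{le-IncreasingV}. The execution differs in a way worth noting. The paper splits into the cases $d=1$, $2\leq d\leq N-2$, $d=N-1$ and regroups the result into differences such as $(1-3p)[V_{\nu-1}(0,0)-V_{\nu-1}(1,\Delta+1)]$, so its argument leans on $p\in[0,\tfrac13]$ to keep the coefficient $(1-3p)$ non-negative. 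You instead leave the sum $\sum_{d'}P_{d,d'}V_{\nu-1}(d',\Delta')$ ungrouped and observe that each summand with $d'\neq0$ is increasing in $\Delta$ while the $d'=0$ summand (present only when $d=1$) is constant; this handles all $d\neq0$ uniformly, absorbs the $d=N-1$ boundary for free, and does not use $p\leq\tfrac13$ at all --- only that the $P_{d,d'}$ are non-negative. That is a genuine (if small) simplification. One cosmetic point: the lemma asserts strict decrease, and you conclude only that $\delta V_{\nu}(d,\Delta)$ is non-increasing; since Lemma \ref{le-IncreasingV} gives strict monotonicity and the self-transition weight $P_{d,d}=1-2p$ is strictly positive (as $p\leq\tfrac13$), the term $P_{d,d}V_{\nu-1}(d,\Delta+d)$ is strictly increasing in $\Delta$, so strictness follows immediately from your own argument (assuming $p_s>0$, which the paper also needs implicitly).
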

\begin{proof}
We distinguish between following cases.
\begin{itemize}
	\item When $d=1$, applying \eqref{eq-TransitionPro} to \eqref{eq-UpdateValueAction} yields
	\begin{equation}\label{eq-Delta1}
	\begin{split}
	\delta V_{\nu}(1,\Delta) = & \lambda + p_s\{p[V_{\nu-1}(1,1)-V_{\nu-1}(1,\Delta+1)]+\\
	& (1-3p)[V_{\nu-1}(0,0)-V_{\nu-1}(1,\Delta+1)]+\\
	& p[V_{\nu-1}(1,1)-V_{\nu-1}(2,\Delta+2)]\}.
	\end{split}
	\end{equation}
	We notice that $(1-3p)$ is non-negative as $p\in[0,\frac{1}{3}]$.
	\item When $2\leq d\leq N-2$, following the same trajectory, we have
	\begin{equation}\label{eq-DeltaGeneral}
	\begin{split}
	\delta V_{\nu}(d,\Delta) = & \lambda+ p_s\{(1-2p)[V_{\nu-1}(0,0)-V_{\nu-1}(d,\Delta+d)]+\\
	& p[V_{\nu-1}(1,1)-V_{\nu-1}(d-1,\Delta+d-1)] +\\
	& p[V_{\nu-1}(1,1)-V_{\nu-1}(d+1,\Delta+d+1)]\}.
	\end{split}
	\end{equation}
	\item When $d=N-1$, following again the same trajectory, we have
	\begin{equation}\label{eq-DeltaLast}
	\begin{split}
	\delta V_{\nu}(N-1,\Delta) = & \lambda + p_s\{(1-2p)[V_{\nu-1}(0,0)-V_{\nu-1}(N-1,\Delta+N-1)] + \\
	& 2p[V_{\nu-1}(1,1)-V_{\nu-1}(N-2,\Delta+N-2)]\}.
	\end{split}
	\end{equation}
\end{itemize}
We recall that $\lambda$ is a non-negative constant and $V_{\nu-1}(d,\Delta)$ is increasing in both $d$ and $\Delta$ by Lemma \ref{le-IncreasingV}. Then, we can see that \eqref{eq-Delta1}, \eqref{eq-DeltaGeneral}, and \eqref{eq-DeltaLast} are nothing but the sum of a constant and a negative term that is decreasing in $\Delta$. Combing together, we can conclude our proof.
\end{proof}
With the lemma given, we can see that, for fixed $d\neq0$, $\delta V_{\nu}(d,\Delta)$ will decrease as $\Delta$ increases and, at some point, it will become negative. Therefore, for the states with fixed $d\neq0$, the optimal action $a_t^*$ will switch from $a_t^*=0$ to $a_t^*=1$ as $\Delta$ increases.\footnote{It is worth noting that $\delta V_{\nu}(d,\Delta)$ can always be negative which means that the optimal action $a_t^*$ will always be $a_t^*=1$. } We define the switching point for each $d\neq0$ as the first $\Delta$ such that $\delta V_{\nu}(d,\Delta)$ is non-positive. Since the instant cost is unbounded, the value function must also be unbounded. Therefore, the switching points always exist. We notice that the expressions of $\delta V_{\nu}(d,\Delta)$ differ for different $d$. Consequently, the corresponding switching points will also be different. To investigate the relationships between the switching points, we provide the following lemma.
\begin{lemma}\label{le-DeltaV2}
$\delta V_{\nu}(d,\Delta)$ is decreasing in $d$ when $d\neq0$ and $\nu>0$.
\end{lemma}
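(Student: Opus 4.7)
\smallskip

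\noindent\textbf{Proof proposal.} The plan is to use the closed-form expressions for $\delta V_{\nu}(d,\Delta)$ derived in the proof of Lemma \ref{le-DeltaV1} (equations \eqref{eq-Delta1}, \eqref{eq-DeltaGeneral}, and \eqref{eq-DeltaLast}) and show that for each admissible pair of consecutive values $d$ and $d+1$, the difference $\delta V_{\nu}(d+1,\Delta)-\delta V_{\nu}(d,\Delta)$ is non-positive. Since $\lambda$ cancels in every such difference, monotonicity reduces to a comparison among values $V_{\nu-1}(\cdot,\cdot)$, which we then dispatch using Lemma \ref{le-IncreasingV}.

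First I would handle the generic interior case $2\le d\le N-3$, where both $\delta V_{\nu}(d,\Delta)$ and $\delta V_{\nu}(d+1,\Delta)$ have the same functional form \eqref{eq-DeltaGeneral}. Here the terms involving $V_{\nu-1}(0,0)$ and $V_{\nu-1}(1,1)$ cancel exactly, and the difference collapses into a convex combination (with non-negative weights $(1-2p)$, $p$, $p$, which are all non-negative because $p\in[0,\tfrac{1}{3}]$) of expressions of the form $V_{\nu-1}(d',\Delta')-V_{\nu-1}(d'+1,\Delta'+1)$. Each such expression is non-positive by Lemma \ref{le-IncreasingV} (monotonicity in both arguments), hence the overall difference is $\le 0$.

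Next I would treat the two boundary transitions $d=1\to d=2$ and $d=N-2\to d=N-1$, where the formulas differ because of the reflecting structure of the $d$-chain at the endpoints. For the lower boundary, subtracting \eqref{eq-Delta1} from \eqref{eq-DeltaGeneral} at $d=2$ and collecting terms yields, after a short rearrangement,
\[
\delta V_{\nu}(2,\Delta)-\delta V_{\nu}(1,\Delta)=p_s\bigl\{(1-3p)[V_{\nu-1}(1,\Delta+1)-V_{\nu-1}(2,\Delta+2)]+p[V_{\nu-1}(0,0)-V_{\nu-1}(3,\Delta+3)]\bigr\},
\]
which is non-positive because $1-3p\ge 0$ and both bracketed quantities are non-positive by Lemma \ref{le-IncreasingV}. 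An analogous manipulation at the upper boundary, subtracting \eqref{eq-DeltaGeneral} at $d=N-2$ from \eqref{eq-DeltaLast}, gives a similar decomposition into a non-negative coefficient times a monotone-decreasing value-function difference.

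The main obstacle I anticipate is strictly bookkeeping at the two boundaries: the coefficient $(1-3p)$ appears in place of $(1-2p)$ at $d=1$ (and symmetrically at $d=N-1$), so the value-function terms do not cancel as cleanly as in the interior case and one must regroup carefully to exhibit a non-positive combination. This is where the assumption $p\in[0,\tfrac{1}{3}]$ is essential, just as it was in Lemma \ref{le-IncreasingV}. Once all three cases give a non-positive consecutive difference, chaining them establishes the claim for every $1\le d_2<d_1\le N-1$.
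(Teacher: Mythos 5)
Your proposal is correct and follows essentially the same route as the paper's proof: it uses the explicit expressions \eqref{eq-Delta1}, \eqref{eq-DeltaGeneral}, and \eqref{eq-DeltaLast}, splits into the interior case and the two boundary cases, and reduces each difference to non-negative coefficients (requiring $p\in[0,\tfrac{1}{3}]$ for the $(1-3p)$ terms) multiplying value-function differences controlled by Lemma \ref{le-IncreasingV}. The only cosmetic deviation is that you compare consecutive $d$'s and chain, while the paper compares arbitrary $2\le d_1<d_2\le N-2$ directly in the interior; your explicit formula for $\delta V_{\nu}(2,\Delta)-\delta V_{\nu}(1,\Delta)$ matches (the negative of) the paper's \eqref{eq-DiffofDeltaV1} after regrouping.
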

\begin{proof}
It is equivalent to show that $\forall\ \Delta>0$, $\delta V_{\nu}(d_1,\Delta)>\delta V_{\nu}(d_2,\Delta)$ if $1\leq d_1<d_2\leq N-1$. To this end, we distinguish between the following cases.
\begin{itemize}
\item When $d_1=1$ and $d_2=2$, leveraging \eqref{eq-Delta1} and \eqref{eq-DeltaGeneral}, we have
\begin{equation}\label{eq-DiffofDeltaV1}
\begin{split}
\delta V_{\nu}(1,\Delta)-\delta V_{\nu}(2,\Delta) = & p_s\{(1-2p)[V_{\nu-1}(2,\Delta+2)-V_{\nu-1}(1,\Delta+1)]+\\
& p[V_{\nu-1}(3,\Delta+3)-V_{\nu-1}(2,\Delta+2)]+\\
& p[V_{\nu-1}(1,\Delta+1)-V_{\nu-1}(0,0)]\}.
\end{split}
\end{equation}
\item When  $2\leq d_1<d_2\leq N-2$, leveraging \eqref{eq-DeltaGeneral}, we have
\begin{equation}\label{eq-DiffofDeltaV2}
\begin{split}
\delta V_{\nu}(d_1,\Delta) - \delta V_{\nu}(d_2,\Delta) & = p_s\{(1-2p)[V_{\nu-1}(d_2,\Delta+d_2)-V_{\nu-1}(d_1,\Delta+d_1)]+\\
& p[V_{\nu-1}(d_2-1,\Delta+d_2-1)-V_{\nu-1}(d_1-1,\Delta+d_1-1)]+\\
& p[V_{\nu-1}(d_2+1,\Delta+d_2+1)-V_{\nu-1}(d_1+1,\Delta+d_1+1)]\}.
\end{split}
\end{equation}
\item Similarly, when $d_1=N-2$ and $d_2=N-1$, we have
\begin{equation}\label{eq-DiffofDeltaV3}
\begin{split}
\delta V_{\nu}(N-2,\Delta)& - \delta V_{\nu}(N-1,\Delta) = \\
& p_s\{(1-3p)[V_{\nu-1}(N-1,\Delta+N-1)-V_{\nu-1}(N-2,\Delta+N-2)] + \\
& p[V_{\nu-1}(N-2,\Delta+N-2)-V_{\nu-1}(N-3,\Delta+N-3)]\}.
\end{split}
\end{equation}
\end{itemize}
According to Lemma \ref{le-IncreasingV}, $V_{\nu-1}(d,\Delta)$ is increasing in both $d$ and $\Delta$. Combining with the fact that $p\in[0,\frac{1}{3}]$, we can easily verify that \eqref{eq-DiffofDeltaV1}, \eqref{eq-DiffofDeltaV2}, and \eqref{eq-DiffofDeltaV3} are all positive. Consequently, $\delta V_{\nu}(d_1,\Delta)>\delta V_{\nu}(d_2,\Delta)$ holds $\forall\ 1\leq d_1<d_2\leq N-1$ which concludes our proof.
\end{proof}
Let $n^t_d$ denotes the switching point for the states with $d\neq0$ at iteration $t>0$. Then, $\delta V_{\nu}(d,n^t_d-1)>0$ and $\delta V_{\nu}(d,n^t_d)\leq0$. Since $\delta V_{\nu}(d,\Delta)$ is decreasing in $d$, $\delta V_{\nu}(d',n^t_d)<\delta V_{\nu}(d,n^t_d)\leq0$ if $d'>d$. This indicates that the ordering $n^t_{d'}\leq n^t_d$ must hold. Thus, we can conclude that the switching points $n^t_d$ when $d\neq0$ are non-increasing in $d$.

Finally, we discuss the only missing case: $d=0$. According to \eqref{eq-AoII}, $\Delta=0$ if and only if $d=0$. Thus, we only need to consider the state $(0,0)$. Then, we apply \eqref{eq-TransitionPro} to \eqref{eq-UpdateValueAction} which yields $\delta V_{\nu}(0,0)=\lambda$ for any $t>0$. As $\lambda$ is a non-negative constant, we can conclude that the optimal action at state $(0,0)$ is $a_t^*=0$.

As the above results are valid for any $\nu>0$ and \textit{RVI} converges to the value function as $\nu\rightarrow +\infty$ (i.e. $\lim_{\nu\rightarrow +\infty}V_{\nu}(\cdot)=V(\cdot)$), we can conclude that the above results are valid for the optimal policy for \eqref{eq-Minimization}.

\section{Proof of Theorem \ref{theo-ASMConvergence}}\label{proof-ASMConvergence}
We first introduce the infinite horizon $\gamma$-discounted cost of $\mathcal{M}$ where $0 < \gamma < 1$ is a discount factor. The expected $\gamma$-discounted cost under policy $\pi$ starting from state $x$ can be calculated as
\[
V_{\pi,\gamma}(x) = \mathbb{E}_{\pi}\left[\sum_{t=0}^{\infty}\gamma^tC(x_t,a_t)\ |\ x\right].
\]
The quantity $V_{\gamma}(\cdot) \triangleq inf_{\pi} V_{\pi,\gamma}(\cdot)$ is the best that can be achieved. Equivalently, $V_{\gamma}(\cdot)$ is  the value function associated with the infinite horizon $\gamma$-discounted \textit{MDP}. Then $V_{\gamma}(\cdot)$ satisfies the following Bellman equation.
\[
V_{\gamma}(x) = \min_{a}\left\lbrace C(x,a)+\gamma\sum_{x'\in\mathcal{X}}P_{xx'}(a)V_{\gamma}(x')\right\rbrace.
\]
We further define the quantity $v_{\gamma,n}(\cdot)$ as the minimum expected discounted cost for operating the system from time $t=0$ to $t = n-1$. It is known that $lim_{n\rightarrow\infty}v_{\gamma,n}(x) = V_{\gamma}(x)$, for all $x\in\mathcal{X}$. We also define the expected cost under policy $\pi$ starting from state $x$ as
\[
J_{\pi}(x) = \limsup_{n\rightarrow\infty}\frac{1}{n}\mathbb{E}_{\pi}\left[\sum_{t=0}^{n-1}C(x_t,a_t)\ |\ x\right],
\]
and $J(\cdot) \triangleq \inf_{\pi} J_{\pi}(\cdot)$ is the best that can be achieved. $V_{\pi,\gamma}^{(m)}(\cdot)$, $V_{\gamma}^{(m)}(\cdot)$, $v_{\gamma,n}^{(m)}(\cdot)$, $J^{(m)}_{\pi}(\cdot)$ and $J^{(m)}(\cdot)$ are defined analogously for the truncated \textit{MDP} $\mathcal{X}^{(m)}$. We define $h_{\gamma}^{(m)}(x)\triangleq V_{\gamma}^{(m)}(x)- V_{\gamma}^{(m)}(0)$ as the relative value function and chose the reference state $0=(0,0)$. For the simplicity of notation, for any two state $x,y\in\mathcal{X}$, we say $x\leq y$ if and only if $x_d\leq y_d$ and $x_\Delta\leq y_\Delta$.

With the above definitions in mind, we claim that our system verifies the two assumptions given in \cite{b17}. That is
\begin{itemize}
\item \textit{Assumption 1}: There exists a non-negative (finite) constant $L$, a non-negative (finite) function $M(\cdot)$ on $\mathcal{X}$, and constants $m_0$ and $\gamma_0\in [0, 1)$, such that $-L \leq h_{\gamma}^{(m)}(x) \leq M(x)$, for $x\in \mathcal{X}^{(m)}$, $m \geq m_0$, and $\gamma\in (\gamma_0, 1)$: We recall that $V_{\gamma}^{(m)}(x)$ is the value function and satisfies the Bellman equation. Thus, we can show that $V_{\gamma}^{(m)}(x)$ is increasing in $x$ in a similar way as did in Lemma \ref{le-IncreasingV}. The proof is omitted for the sake of space. Then, $h_{\gamma}^{(m)}(x)= V_{\gamma}^{(m)}(x)- V_{\gamma}^{(m)}(0)\geq0$. Consequently, we can choose $L=0$.

Let $c_{x,0}(\psi)$ be the expected cost of a first passage from $x\in\mathcal{X}$ to the reference state 0 when policy $\psi$ is adopted and $c_{x,0}^{(m)}(\psi)$ is defined analogously for the truncated \textit{MDP} $\mathcal{X}^{(m)}$. In the following, we consider the policy $\psi$ being always update policy where the transmitter makes transmission attempt at each time slot. Since the policy $\psi$ induces an irreducible ergodic Markov chain and the expected cost is finite, $h_{\gamma}^{(m)}(x)\leq c_{x,0}^{(m)}(\psi)$ from Proposition 5 of \cite{b19} and $c_{x,0}(\psi)$ is finite from Proposition 4 of \cite{b19}. We also know that $c_{x,0}(\psi)$ satisfies the following equation \cite{b17}.
\begin{equation}\label{eq-c}
c_{x,0}(\psi) = C(x,a^{\psi}) + \sum_{x'\in\mathcal{X}-\{0\}}P_{xx'}(a^{\psi})c_{x',0}(\psi).
\end{equation}
We can verify in a similar way to the proof of Lemma \ref{le-IncreasingV} that $c_{x,0}(\psi)$ is increasing in $x$. The proof is omitted here for the sake of space. Then, we obtain
\begin{equation}\label{eq-ASMInequality}
\begin{split}
\sum_{y\in\mathcal{X}^{(m)}_{-1}}P_{xy}^{(m)}(a^{\psi})c_{y,0}(\psi) & = \sum_{y\in\mathcal{X}^{(m)}_{-1}}P_{xy}(a^{\psi})c_{y,0}(\psi)+\sum_{y\in\mathcal{X}^{(m)}_{-1}}\left(\sum_{z\notin\mathcal{X}^{(m)}}P_{xz}(a^{\psi})q_z(y)\right)c_{y,0}(\psi)\\
& = \sum_{y\in\mathcal{X}^{(m)}_{-1}}P_{xy}(a^{\psi})c_{y,0}(\psi)+\sum_{z\notin\mathcal{X}^{(m)}}P_{xz}(a^{\psi})\left(\sum_{y\in\mathcal{X}^{(m)}_{-1}}q_z(y)c_{y,0}(\psi)\right)\\
& \leq \sum_{y\in\mathcal{X}^{(m)}_{-1}}P_{xy}(a^{\psi})c_{y,0}(\psi)+\sum_{z\notin\mathcal{X}^{(m)}}P_{xz}(a^{\psi})c_{z,0}(\psi)\\
& = \sum_{y\in\mathcal{X}-{\{0\}}}P_{xy}(a^{\psi})c_{y,0}(\psi),
\end{split}
\end{equation}
where $\mathcal{X}^{(m)}_{-1} = \mathcal{X}^{(m)}-\{0\}$. Applying \eqref{eq-ASMInequality} to \eqref{eq-c} yields
\begin{equation*}
c_{x,0}(\psi) \geq C(x,a^{\psi}) + \sum_{y\in\mathcal{X}^{(m)}-\{0\}}P_{xy}^{(m)}(a^{\psi})c_{y,0}(\psi).
\end{equation*}
Bearing in mind that $c_{x,0}^{(m)}(\psi)$ satisfies the following.
\begin{equation*}
c_{x,0}^{(m)}(\psi) = C(x,a^{\psi}) + \sum_{y\in\mathcal{X}^{(m)}-\{0\}}P_{xy}^{(m)}(a^{\psi})c_{y,0}^{(m)}(\psi),
\end{equation*}
we can conclude that $c_{x,0}^{(m)}(\psi)\leq c_{x,0}(\psi)$. Thus, we can choose $M(x)=c_{x,0}(\psi)<\infty$.

\item \textit{Assumption 2}: $\limsup_{m\rightarrow\infty}J^{(m)} \triangleq J^*<\infty$ and $J^*\leq J(x)$, for all $x\in\mathcal{X}$: We first show the hypothesis in Proposition 5.1 of \cite{b17} is true. Since we redistribute the excess probabilities in a way such that, for all $z\in\mathcal{X}-\mathcal{X}^{(m)}$,
\begin{equation*}
\sum_{y\in\mathcal{X}^{(m)}}q_z(y)v_{\gamma,n}(y) = v_{\gamma,n}(x),
\end{equation*}
where $x_d=z_d$ and $x_\Delta=m$, we only need to verify that, for all $z\in\mathcal{X} - \mathcal{X}^{(m)}$
\begin{equation}\label{eq-inequalityv}
v_{\gamma,n}(x) \leq v_{\gamma,n}(z).
\end{equation}
As $v_{\gamma,n}(x)$ adopts the following inductive form \cite{b17}.
\begin{equation*}
v_{\gamma,n+1}(x) = \min_{a}\left\lbrace C(x,a)+\gamma\sum_{x'\in\mathcal{X}}P_{xx'}(a)v_{\gamma,n}(x')\right\rbrace,
\end{equation*}
we can prove \eqref{eq-inequalityv} is true in a similar way to Lemma \ref{le-IncreasingV} and the proof is omitted for the sake of space. $J(x)$ is trivially finite for $x\in\mathcal{X}$. Then, according to Corollary 5.2 of \cite{b17}, assumption 2 is valid.
\end{itemize}
Consequently, the following results are true.
\begin{itemize}
\item There exists an average cost optimal stationary policy in $\mathcal{M}^{(m)}$.
\item Any limit point of the sequence of optimal policies in $\mathcal{M}^{(m)}$ is optimal in $\mathcal{M}$.
\end{itemize}

\section{Proof of Proposition \ref{prop-TransmissionRate}}\label{proof-TransmissionRate}
We first delve into the state space $\mathcal{X}$ of the \textit{MDP} $\mathcal{M}$ and provide the condition it must satisfy. Without loss of generality, we suppose the system always starts from state $(0,0)$. We claim that state $(d,\Delta)$ with $d\neq0$ must satisfy the following condition.
\begin{equation}\label{eq-BoundDelta}
\Delta \geq l_d = \sum_{i=1}^{d}i = \frac{d^2+d}{2}.
\end{equation}
To see the condition, we notice that the transition to state $(0,0)$ is equivalent to restarting the system. Thus, it is sufficient to consider the sequence of transitions starting from the last time the system is at state $(0,0)$. Therefore, the age $\Delta$ will always increase. We recall that the maximum jump of $d$ is 1 as specified in \eqref{eq-TransitionPro}. Thus, we can conclude that there always exists a lower bound $l_{d}$ on the age $\Delta$ for any given $d\neq0$. Combing with the system dynamic discussed in Section \ref{sec-SystemDynamic}, the lower bound in \eqref{eq-BoundDelta} is easy to obtain. To make the structure of equations consistent, we define the states that violate the condition \eqref{eq-BoundDelta} as \textit{virtual} states since the system will never reach these states.

As the state space is clarified, we proceed with deriving the main results. We first recall that the threshold policy $\bm{n}$ possesses the properties detailed in Proposition \ref{prop-ThresholdPolicy}. More precisely, for the state with given $d\neq0$, the action suggested by $\bm{n}$ is $a^*=1$ if the age $\Delta$ is larger than or equal to the corresponding threshold $n_d$. Hence, we define $\tau=max\{\bm{n}\}$. An important property of $\tau$ is that, for the states with $\Delta\geq\tau$, the actions suggested by $\bm{n}$ are the same. We define $a_{d,\Delta}$ as the action suggested by $\bm{n}$ at state $(d,\Delta)$. For each $1\leq d\leq N-1$, we define 
\[
\Pi_d(\tau) = \sum_{\Delta=\tau}^{+\infty}a_{d,\Delta}\pi_d(\Delta) = \sum_{\Delta=\tau}^{+\infty}\pi_d(\Delta).
\]
The last equality holds since $a_{d,\Delta}=1$ for all the states with $\Delta\geq\tau$. Then, the expected transmission rate can be calculated as
\begin{equation*}
\bar{R}_{\bm{n}} = \sum_{d=1}^{N-1}\sum_{\Delta=n_d}^{+\infty}\pi_d(\Delta) = \sum_{d=1}^{N-1}\left(\sum_{\Delta=n_d}^{\tau-1}\pi_d(\Delta) + \Pi_d(\tau)\right).
\end{equation*}
We claim that $\Pi_d(\tau)$'s, along with the stationary distribution $\pi_d(\Delta)$'s, can be obtained by solving a finite system of linear equations induced from the balance equation \eqref{eq-BalanceEquation}. Leveraging the results in Section \ref{sec-SystemDynamic}, we distinguish between the following cases.
\begin{itemize}
\item For state $(0,0)$, \eqref{eq-BalanceEquation} can be written as
\begin{equation*}
\begin{split}
\pi_0(0) & = (1-2p)\pi_0(0) + p\sum_{\Delta=1}^{+\infty}(1-p_sa_{1,\Delta})\pi_1(\Delta)+p_s(1-2p)\sum_{d=1}^{N-1}\sum_{\Delta=l_d}^{+\infty}a_{d,\Delta}\pi_d(\Delta)\\
& = (1-2p)\pi_0(0) + p\sum_{\Delta=1}^{\tau-1}(1-p_sa_{1,\Delta})\pi_1(\Delta)+p_fp\Pi_1(\tau) + \\
&\ \ \ \ p_s(1-2p)\sum_{d=1}^{N-1}\Bigg(\sum_{\Delta=n_d}^{\tau-1}\pi_d(\Delta)+\Pi_d(\tau)\Bigg).
\end{split}
\end{equation*}
This recovers \eqref{eq-Balance0}.
\item For state $(1,1)$, \eqref{eq-BalanceEquation} can be written as
\begin{equation*}
\begin{split}
\pi_1(1)  &= 2p\pi_0(0) + 2p_sp\sum_{d=1}^{N-1}\sum_{\Delta=l_d}^{+\infty}a_{d,\Delta}\pi_d(\Delta)\\
&= 2p\pi_0(0) + 2p_sp\sum_{d=1}^{N-1}\left(\sum_{\Delta=n_d}^{\tau-1}\pi_d(\Delta)+\Pi_d(\tau)\right).\\
\end{split}
\end{equation*}
This recovers \eqref{eq-Balance1}. 
\item For the virtual states, we define the steady-state probabilities as zero since the system will never reach these states. This recovers the first equation of \eqref{eq-BalanceGeneral}.
\item For other states, leveraging the definition of virtual states, we obtain an alternative form of \eqref{eq-BalanceEquation} which is
\begin{equation}\label{eq-GeneralPiCalculate}
\pi_d(\Delta) = \sum_{d'=1}^{N-1}P_{d',d}(1-p_sa_{d',\Delta-d})\pi_{d'}(\Delta-d).
\end{equation}
As $\Delta\in\mathbbm{N}^*$, there are infinitely many equations to solve. Inspired by the definition of $\Pi_d(\tau)$, we can combine the states with $\Delta\geq\tau$ and eliminate the infinity. More precisely, for each $1\leq d\leq N-1$, we do the following.
\begin{equation}\label{eq-PtauCalculate}
\begin{split}
\sum_{\Delta=\tau}^{+\infty}\pi_d(\Delta) & = \Pi_d(\tau) = \sum_{d'=1}^{N-1}P_{d',d}\left(\sum_{\Delta=\tau}^{+\infty}(1-p_sa_{d',\Delta-d})\pi_{d'}(\Delta-d)\right)\\
& = \sum_{d'=1}^{N-1}P_{d',d}\Bigg(\sum_{\Delta=\tau-d}^{\tau-1}(1-p_sa_{d',\Delta})\pi_{d'}(\Delta)+ p_f\sum_{\Delta=\tau}^{+\infty}\pi_{d'}(\Delta)\Bigg)\\
& = \sum_{d'=1}^{N-1}P_{d',d}\Bigg(\sum_{\Delta=\tau-d}^{\tau-1}(1-p_sa_{d',\Delta})\pi_{d'}(\Delta)+ p_f\Pi_{d'}(\tau)\Bigg).
\end{split}
\end{equation}
Combining \eqref{eq-GeneralPiCalculate} and \eqref{eq-PtauCalculate}, we recover the second and third equation of \eqref{eq-BalanceGeneral}.
\end{itemize}
Equation \eqref{eq-BalanceSum} is obtained from the fact that the steady-state probabilities must add up to one.

By combining the states with $\Delta\geq\tau$, we actually cast the induced infinite-state Markov chain to a finite-state Markov chain. Therefore, the expected transmission rate can be calculated theorecially without any approximation.

\section{Proof of Corollary \ref{prop-RateApproximation}}\label{proof-RateApproximation}
We inherit the notations and definitions from the proof of Proposition \ref{prop-TransmissionRate}. Before presenting the main results, we first introduce the key approximation used in the derivation of the main results. We note that when the thresholds in $\bm{n}$ are huge, the expected transmission rate will be insignificant. More precisely, when the thresholds $n_d$'s are huge,
\begin{equation*}
\bar{R}_{\bm{n}}=\sum_{d=1}^{N-1}\left(\sum_{\Delta=n_d}^{\tau-1}\pi_d(\Delta)+\Pi_d(\tau)\right)\approx 0.
\end{equation*}
We apply the above equation to \eqref{eq-Balance1} and obtain the following key approximation.
\begin{equation}\label{eq-Approximation}
\pi_1(1) \approx 2p\pi_0(0).
\end{equation}
Then, we claim that, for any state $(d,\Delta)$, the steady-state probability $\pi_d(\Delta)$ can be approximated as
\[
\pi_d(\Delta) \approx c^{\bm{n}}_{d,\Delta}\pi_0(0),
\]
where $c^{\bm{n}}_{d,\Delta}$ is a scalar depends on the policy and the state. To prove this, we first recall that the transitions in the induced Markov chain always go along the increasing direction of $\Delta$ unless it goes back to state $(0,0)$ or $(1,1)$. Combining with the approximation made in \eqref{eq-Approximation}, we can see that any $\pi_d(\Delta)$ can be approximated as a multiple of $\pi_0(0)$. 

With this in mind, we notice that, for any two threshold policies $\bm{n}_1$ and $\bm{n}_2$, the suggested actions by the two polices at states with $\Delta<\min\{[\bm{n}_1,\bm{n}_2]\}$ are the same (i.e. $a_{d,\Delta}=0$). We denote by $G(\bm{n}_1,\bm{n}_2)$ the set of these states. Then, for state $(d,\Delta)\in G(\bm{n}_1,\bm{n}_2)$, regardless of whether $\bm{n}_1$ or $\bm{n}_2$ is adopted, the balance equation is the same. Consequently, the corresponding $c^{\bm{n}}_{d,\Delta}$ is independent of policy. Then, for $(d,\Delta)\in G(\bm{n}_1,\bm{n}_2)$,
\begin{equation}\label{eq-ApproximationUsed}
\frac{\pi^1_d(\Delta)}{\pi^2_d(\Delta)}\approx\frac{c_{d,\Delta}\pi^1_0(0)}{c_{d,\Delta}\pi^2_0(0)} = \frac{\pi^1_0(0)}{\pi^2_0(0)},
\end{equation}
where $\pi^1_d(\Delta)$'s and $\pi^2_d(\Delta)$'s are the stationary distribution when $\bm{n}_1$ and $\bm{n}_2$ is adopted, respectively. Leveraging \eqref{eq-ApproximationUsed}, we can obtain the main results in the corollary. 
We first define $\eta=min\{\bm{n}\}-1$ and $\Pi_d(\eta)=\sum_{\Delta=1}^{\eta}\pi_d(\Delta)$. Similar to what we did in the proof of Proposition \ref{prop-TransmissionRate}, we rewrite the balance equation \eqref{eq-BalanceEquation} as follows.
\begin{itemize}
\item For state $(0,0)$, \eqref{eq-BalanceEquation} can be written as
\begin{equation*}
\begin{split}
\pi_0(0)  &= (1-2p)\pi_0(0) + p\sum_{\Delta=1}^{+\infty}(1-p_sa_{1,\Delta})\pi_1(\Delta)+p_s(1-2p)\sum_{d=1}^{N-1}\sum_{\Delta=l_d}^{+\infty}a_{d,\Delta}\pi_d(\Delta)\\
&= (1-2p)\pi_0(0) + p\sum_{\Delta=\eta+1}^{\tau-1}(1-p_sa_{1,\Delta})\pi_1(\Delta)+p\Pi_1(\eta)+p_fp\Pi_1(\tau) + \\
& \ \ \ \ p_s(1-2p)\sum_{d=1}^{N-1}\Bigg(\sum_{\Delta=n_d}^{\tau-1}\pi_d(\Delta)+\Pi_d(\tau)\Bigg).
\end{split}
\end{equation*}
This recovers \eqref{eq-Approximation0}.
\item For state $(1,1)$, \eqref{eq-BalanceEquation} can be written as
\begin{equation*}
\begin{split}
\pi_1(1)  &= 2p\pi_0(0) + 2p_sp\sum_{d=1}^{N-1}\sum_{\Delta=l_d}^{+\infty}a_{d,\Delta}\pi_d(\Delta)\\
&= 2p\pi_0(0) + 2p_sp\sum_{d=1}^{N-1}\left(\sum_{\Delta=n_d}^{\tau-1}\pi_d(\Delta)+\Pi_d(\tau)\right).
\end{split}
\end{equation*}
This recovers \eqref{eq-Approximation1}.
\item For other states, leveraging the definition of virtual states, we obtain an alternative form of \eqref{eq-BalanceEquation} which is
\begin{equation}\label{eq-ApproximationGeneral}
\pi_d(\Delta) = \sum_{d'=1}^{N-1}P_{d',d}(1-p_sa_{d',\Delta-d})\pi_{d'}(\Delta-d).
\end{equation}
\end{itemize}

Instead of applying  \eqref{eq-ApproximationGeneral} directly, we combine the states with $\Delta\leq\eta$ to reduce the number of equations. More precisely, for each $2\leq d\leq N-1$, we have
\begin{equation}\label{eq-ApproximationHead1}
\begin{split}
\sum_{\Delta=1}^{\eta+d}\pi_d(\Delta) & = \Pi_d(\eta) + \sum_{\Delta=\eta+1}^{\eta+d}\pi_d(\Delta) \\
& =\sum_{\Delta=1}^{\eta+d}\Bigg(\sum_{d'=1}^{N-1}P_{d',d}(1-p_sa_{d',\Delta-d})\pi_{d'}(\Delta-d)\Bigg)\\
& = \sum_{d'=1}^{N-1}P_{d',d}\Pi_{d'}(\eta).
\end{split}
\end{equation}
When $d=1$, due to the particularity of state $(1,1)$, we have
\begin{equation}\label{eq-ApproximationHead2}
\begin{split}
& \sum_{\Delta=2}^{\eta+1}\pi_1(\Delta) = \Pi_1(\eta)-\pi_1(1)+\sum_{\Delta=\eta+1}^{\eta+1}\pi_1(\Delta) = \sum_{d'=1}^{N-1}P_{d',1}\Pi_{d'}(\eta).
\end{split}
\end{equation}
We notice that \eqref{eq-ApproximationHead1} or \eqref{eq-ApproximationHead2} involves $\pi_d(\Delta)$'s where $1\leq d\leq N-1$ and $\eta+1\leq\Delta\leq\eta+d$. Under usual circumstances, these steady-state probabilities can be calculated using \eqref{eq-ApproximationGeneral}. However, $\pi_d(\Delta)$'s where $\Delta\leq\eta$ are required when applying \eqref{eq-ApproximationGeneral} and we have no access to them as we combined them together as $\Pi_d(\eta)$. To circumvent this, we use the approximation reported in \eqref{eq-ApproximationUsed}. More precisely, for the states with $1\leq d\leq N-1$ and $\eta+1\leq\Delta\leq\eta+d$, we have
\begin{equation}\label{eq-MainApproximation}
\begin{split}
\pi_d(\Delta) & =\sum_{d'=1}^{N-1}P_{d',d}(1-p_sa_{d',\Delta-d})\pi_{d'}(\Delta-d)\\
& \approx \rho\sum_{d'=1}^{N-1}P_{d',d}(1-p_sa_{d',\Delta-d})\sigma_{d'}(\Delta-d),
\end{split}
\end{equation}
where $\rho\triangleq\frac{\pi_0(0)}{\sigma_0(0)}$ and $\sigma_d(\Delta)$ is the stationary distribution of the Markov chain induced by another policy $\bm{n}'$ and can be calculated using Proposition \ref{prop-TransmissionRate}. In order to utilize the approximation reported in \eqref{eq-ApproximationUsed}, the policy $\bm{n}'$ must satisfy
\begin{equation*}
min\{[\bm{n},\bm{n}']\} > \eta.
\end{equation*}
We recall in Proposition \ref{prop-TransmissionRate}, the computational complexity of calculating the stationary distribution of a Markov chain induced by a threshold policy depends on the maximal threshold. To make the calculation of $\sigma_d(\Delta)$ as cheap as possible, we choose $\bm{n}'=[\eta',...,\eta']$ where $\eta'=\min\{\bm{n}\}$. Combining \eqref{eq-ApproximationHead1}, \eqref{eq-ApproximationHead2}, and \eqref{eq-MainApproximation}, we recover \eqref{eq-ApproximationHeads} and the first equation of \eqref{eq-ApproximationD}.

For state with $1\leq d\leq N-1$ and $\eta+d+1\leq \Delta\leq\tau-1$, leveraging the above approximation, we can calculate the steady-state probabilities using \eqref{eq-ApproximationGeneral} which recovers the second equation of \eqref{eq-ApproximationD}.

Finally, for state with $1\leq d\leq N-1$ and $\Delta\geq\tau$, we combine them as did in Proposition \ref{prop-TransmissionRate}. Then, we can recover the third equation of \eqref{eq-ApproximationD}.

Equation \eqref{eq-ApproximationSum} is obtained form the fact that the sum of all steady-state probabilities must be one.

By combining the states with $\Delta\leq\eta$, we reduce the size of the finite-state Markov chain cast to. Although approximation is used, as the thresholds increase, the approximation in \eqref{eq-Approximation} will become more and more accurate.

\section{Proof of Corollary \ref{prop-ExpectedAoII}}\label{proof-ExpectedAoII}
We still inherit the notations and definitions from the proof of Proposition \ref{prop-TransmissionRate}. We first recall that the AoII at state $(d,\Delta)$ is nothing but $\Delta$. Then, similar to what we did in the proof of Proposition \ref{prop-TransmissionRate}, the expected AoII under threshold policy $\bm{n}$ can be calculated as
\[
\bar{\Delta}_{\bm{n}} = \sum_{d=1}^{N-1}\left(\sum_{\Delta=l_d}^{\tau-1}\omega_d(\Delta) + \Omega_d(\tau)\right),
\]
where $\tau=max\{\bm{n}\}$, $l_d=\frac{d^2+d}{2}$, and
\[
\begin{split}
& \omega_d(\Delta) \triangleq \Delta\pi_d(\Delta),\\
& \Omega_d(\tau) \triangleq \sum_{\Delta=\tau}^{+\infty}\omega_d(\Delta).
\end{split}
\]
Note that $\pi_d(\Delta)$'s are the stationary distribution of the infinite-state Markov chain induced from the same threshold policy $\bm{n}$. We claim that $\Omega_d(\tau)$'s, along with $\omega_d(\Delta)$'s, can be obtained by solving a finite system of linear equations. To this end, we distinguish between the following cases.
\begin{itemize}
\item For the virtual states, we have $\omega_d(\Delta)=0$ because $\pi_d(\Delta)=0$ for these state by definition. Meanwhile, $\omega_0(0)=0$ because no cost is paid for being at state $(0,0)$. This recovers the first equation of \eqref{eq-ExpectedAoIIGeneral}.
\item For the states with $1\leq d\leq N-1$ and $l_d\leq\Delta\leq\tau-1$, we have
\begin{equation}\label{eq-ExpectedMiddle}
\omega_d(\Delta) = \Delta\pi_d(\Delta).
\end{equation}
This recovers the second equation of \eqref{eq-ExpectedAoIIGeneral}.
\item For the states with $1\leq d\leq N-1$ and $\Delta\geq\tau$, we can use \eqref{eq-ExpectedMiddle}. But in this case, we need to calculate an infinite number of values. To eliminate the infinity, we notice that $\omega_d(\Delta)$'s can also be calculated by multiplying both sides of \eqref{eq-GeneralPiCalculate} by $(\Delta-d)$. More precisely, we have
\begin{equation*}
\begin{split}
(\Delta-&d)\pi_d(\Delta) = \sum_{d'=1}^{N-1}P_{d',d}(1-p_sa_{d',\Delta-d})(\Delta-d)\pi_{d'}(\Delta-d).
\end{split}
\end{equation*}
Applying the definition of $\omega_d(\Delta)$, we have
\[
\begin{split}
\omega_d(\Delta) -& d\pi_d(\Delta) = \sum_{d'=1}^{N-1}P_{d',d}(1-p_sa_{d',\Delta-d})\omega_{d'}(\Delta-d).
\end{split}
\]
Like we did in the proof of Proposition \ref{prop-TransmissionRate}, we combine the states with $\Delta\geq\tau$ to eliminate the infinity. More precisely, for each $1\leq d\leq N-1$, we have
\[
\begin{split}
\sum_{\Delta=\tau}^{+\infty}\Big(\omega_d(\Delta) - d\pi_d(\Delta)\Big) & = \Omega_d(\tau) - d\Pi_d(\tau) = \sum_{d'=1}^{N-1}P_{d',d}\left(\sum_{\Delta=\tau}^{+\infty}(1-p_sa_{d',\Delta-d})\omega_{d'}(\Delta-d)\right)\\
& = \sum_{d'=1}^{N-1}P_{d',d}\Bigg(\sum_{\Delta=\tau-d}^{\tau-1}(1-p_sa_{d',\Delta})\omega_{d'}(\Delta)+ p_f\sum_{\Delta=\tau}^{+\infty}\omega_{d'}(\Delta)\Bigg)\\
& = \sum_{d'=1}^{N-1}P_{d',d}\Bigg(\sum_{\Delta=\tau-d}^{\tau-1}(1-p_sa_{d',\Delta})\omega_{d'}(\Delta)+ p_f\Omega_{d'}(\tau)\Bigg).
\end{split}
\]
This recovers the last equation of \eqref{eq-ExpectedAoIIGeneral}.
\end{itemize}

\section{Proof of Theorem \ref{theo-Construct}}\label{proof-Construct}
We first make the following definitions. When the \textit{MDP} $\mathcal{M}$ is at state $x$ and action $a$ is chosen, cost $C_1(x,a)=x_\Delta$ and $C_2(x,a)=\lambda a$ are incurred. We define the expected $C_1$-cost and the expected $C_2$-cost under policy $\pi$ as $\bar{C}_1(\pi)$ and $\bar{C_2}(\pi)$, respectively. Let $G$ be a nonempty set and $\mathcal{R}^*(i,G)$ be the class of policies $\pi$ such that 
\begin{itemize}
\item $P_{\pi}(x_n\in G\ for\ some\ n\geq1\ |\ x_0=i)=1$ where $x_n$ is the state of $\mathcal{M}$ at time $n$.
\item The expected time $m_{iG}(\pi)$ of a first passage from $i$ to $G$ under $\pi$ is finite.
\item The expected $C_1$-cost $\bar{C}_1^{i,G}(\pi)$ and the expected $C_2$-cost $\bar{C}_2^{i,G}(\pi)$ of a first passage form $i$ to $G$ under $\pi$ are finite. 
\end{itemize}
With the above definitions clarified, we proceed with presenting the assumptions given in \cite{b15} and verifying our system satisfies all the assumptions.
\begin{enumerate}
\item For all $w>0$, the set $G(w)=\{x\ | $ \textit{there exists an action a such that} $C_1(x,a)+C_2(x,a)\leq w\}$ is finite: For our system, we have $C_1(x,a)+C_2(x,a)=x_\Delta + \lambda a\geq x_\Delta$. Then, any state $x$ in $G(w)$ must satisfy $x_\Delta \leq w$. Bearing in mind that $x_\Delta\in\mathbb{N}_0$, we can conclude that, the set $G(w)$ is always finite.
\item There exists a stationary policy $e$ such that the induced Markov chain has the following properties: the state space $\mathcal{X}$ consists of a single (non-empty) positive recurrent class $R$ and a set $U$ of transient states such that $e\in\mathcal{R}^*(i,R)$, for $i\in U$. Moreover, both $\bar{C}_1(e)$ and $\bar{C}_2(e)$ on $R$ are finite: We consider the always update policy $\psi_{au}$ where the transmitter makes transmission attempt at every time slot. We take the set $R=\mathcal{X}$. Applying the system dynamic discussed in Section \ref{sec-SystemDynamic}, we can see that, under $\psi_{au}$, all the states in $R$ communicate with state $(0,0)$ and state $(0,0)$ is positive recurrent. Consequently, we can conclude that the set $R$ forms a positive recurrent class. The set $U$ can simply be empty set. Finally, we notice that $\bar{C}_2(\psi_{au})$ is nothing but the expected transmission rate which is finite and $\bar{C}_1(\psi_{au})$ is the expected AoII which is also finite.
\item Given any two state $x\neq y$, there exists a policy $\pi$ such that $\pi\in\mathcal{R}^*(x,y)$: We first notice that any state $x\in\mathcal{X}$ communicates with state $(0,0)$ with positive probability if the transmitter makes a transmission attempt at state $x$ and succeeds. We also notice that state $(0,0)$ can reach any state $x\in\mathcal{X}$ as the minimum increase in both $d$ and $\Delta$ is one. Consequently, we can always find a policy that induces a Markov chain such that there exists a path with a positive probability between any two different states $x$ and $y$. The corresponding $\bar{C}_1^{x,y}(\pi)$, $\bar{C}_2^{x,y}(\pi)$ and $m_{x,y}(\pi)$ are trivially finite.
\item If a stationary policy $\pi$ has at least one positive recurrent state, then it has a single positive recurrent class $R$. Moreover, if $x\notin R$, then $\pi\in\mathcal{R}^*(x,R)$ where $x=(0,0)$: We notice that, for any policy, the penalty can decrease only when the system reaches state $(0,0)$ or $(1,1)$. At the same time, $(0,0)$ and $(1,1)$ communicate with each other. Thus, any positive recurrent class must contain $(0,0)$ and $(1,1)$ which indicates that there can only be a single positive recurrent class.
\item There exists a policy $\pi$ such that $\bar{C}_1(\pi)<\infty$ and $\bar{C}_2(\pi)<\alpha$: We first note that $\bar{C}_2(\pi)$ is simply the expected transmission rate. Then, we can always find a policy with large enough thresholds such that $\bar{C}_2(\pi)$ is less than $\alpha$. We can easily verify that the corresponding $\bar{C}_1(\pi)$ is finite.
\end{enumerate}
Some other results in \cite{b15} will be useful when constructing the optimal policy, especially Proposition 3.2, Lemma 3.4, 3.7, 3.9 and 3.10. To this end, we define $\bar{R}_{\lambda}$ as the expected transmission rate associate with policy $\bm{n}_{\lambda}$ and $\lambda^*\triangleq\inf\{\lambda>0:\bar{R}_{\lambda}\leq\alpha\}$. We say a policy is $\lambda^*$-optimal if the policy is optimal for the \textit{MDP} $\mathcal{M}$ with $\lambda=\lambda^*$.

We know that there exists $\lambda^*_+\downarrow\lambda^*$ and $\lambda^*_-\uparrow\lambda^*$ such that they both converge to $\lambda^*$. At the same time, the corresponding optimal policies $\bm{n}_{\lambda^*_+}$ and $\bm{n}_{\lambda^*_-}$ will also converge and are both $\lambda^*$-optimal (Lemma 3.4 and 3.7 of \cite{b15}). Since the Markov chains induced by policies $\bm{n}_{\lambda^*_+}$ and $\bm{n}_{\lambda^*_-}$ are both irreducible and state $(0,0)$ is positive recurrent in both Markov chains, we can choose which policy to adopt every time the system reaches state $(0,0)$ independently without changing its optimality (Proposition 3.2 and Lemma 3.9 of \cite{b15}). Thus, we can mix the two policies in the following way: when the system reaches state $(0,0)$, the system will choose $\bm{n}_{\lambda^*_-}$ with probability $\mu$ and $\bm{n}_{\lambda^*_+}$ with probability $1-\mu$. Then the system will follow the chosen policy until the next choice. The probability $\mu$ is chosen such that the expected transmission rate of the mixed policy $\bm{n}_{\lambda^*}$ is equal to $\alpha$. More precisely,
\begin{equation*}
\mu = \frac{\alpha - \bar{R}_{\lambda^*_+}}{\bar{R}_{\lambda^*_-}-\bar{R}_{\lambda^*_+}}.
\end{equation*}
Then, we can conclude that the mixed policy $\bm{n}_{\lambda^*}$ is optimal for the constrained problem \eqref{eq-Constrained} (Lemma 3.10 of \cite{b15}).

\onecolumn  
\section{}\label{sec-Algorithm}
\begin{figure}[ht]
  \centering
  \begin{minipage}{0.7\columnwidth}
\begin{algorithm}[H]
    \begin{algorithmic}[1]
    \Require
    \Statex MDP $\mathcal{M} = (\mathcal{X},\mathcal{P},\mathcal{A},\mathcal{C})$
    \Statex Convergence Criteria $\epsilon$
    \Procedure{RelativeValueIteration}{$\mathcal{M}$}
        \State Initialize $V_0(x)=x_{\Delta}$; $\nu=0$
        \State Choose $x^{ref}\in\mathcal{X}$ arbitrarily
        \While{$V_{\nu}$ is not converged\footnotemark}
            \For{$x \in \mathcal{X}$}
                \If {$\exists$ active state $y$ s.t. $y_d\leq x_d$ and $y_\Delta\leq x_\Delta$}
                \State $a^*(x) = 1$
                \State $Q_{\nu+1}(x) = C(x,1) + \sum_{x'} P_{xx'}(1) \cdot V_{\nu}(x')$
                \Else
                \For{$a \in \mathcal{A}$}
                    \State $H_{x,a} = C(x, a) + \sum_{x'} P_{xx'}(a) \cdot V_{\nu}(x')$
                \EndFor
                \State $a^*(x) = \arg\min_a \{H_{x,a}\}$
                \State $Q_{\nu+1}(x) = H_{x,a^*}$
            \EndIf
            \EndFor
            \For{$x \in \mathcal{X}$}
                \State $V_{\nu+1}(x) = Q_{\nu+1}(x) - Q_{\nu+1}(x^{ref})$
            \EndFor
            \State $\nu = \nu + 1$
        \EndWhile
        \Return $\bm{n} \gets a^*(x)$
    \EndProcedure
    \end{algorithmic}
\caption{Improved Relative Value Iteration}
\label{alg-RVIA}
\end{algorithm}
\end{minipage}
\end{figure}
\footnotetext{\textit{RVI} converges when the maximum difference between the results of two consecutive iterations is less than $\epsilon$.}

\begin{figure}[ht]
  \centering
  \begin{minipage}{0.7\columnwidth}
\begin{algorithm}[H]
    \begin{algorithmic}[1]
    \Require
    \Statex Power Budget $\alpha$
    \Statex MDP $\mathcal{M}^{(m)}(\lambda) = (\mathcal{X}^{(m)},\mathcal{A},\mathcal{P}^{(m)},\mathcal{C}(\lambda))$
    \Statex Tolerance $\xi$
    \Procedure{BisectionSearch}{$\mathcal{M}^{(m)}(\lambda)$, $\alpha$}
        \State Initialize $\lambda_-=0$; $\lambda_+=1$
        \State $\bm{n}_{\lambda_+} = RVI(\mathcal{M}^{(m)}(\lambda_+),\epsilon)$ using Algorithm \ref{alg-RVIA}
        \State $\bar{R}_{\lambda_+} = \bar{R}(\bm{n}_{\lambda_+})$ using Proposition \ref{prop-TransmissionRate}
        \While{$\bar{R}_{\lambda_+}\geq\alpha$}
            \State $\lambda_-= \lambda_+$; $\lambda_+ = 2\lambda_+$
            \State $\bm{n}_{\lambda_+} = RVI(\mathcal{M}^{(m)}(\lambda_+),\epsilon)$ using Algorithm \ref{alg-RVIA}
        	\State $\bar{R}_{\lambda_+} = \bar{R}(\bm{n}_{\lambda_+})$ using Proposition \ref{prop-TransmissionRate}
        \EndWhile
        \While{$\lambda_+ - \lambda_- \geq \xi$}
            \State $\lambda = \frac{\lambda_+ + \lambda_-}{2}$
            \State $\bm{n}_{\lambda} = RVI(\mathcal{M}^{(m)}(\lambda),\epsilon)$ using Algorithm \ref{alg-RVIA}
        	\State $\bar{R}_{\lambda} = \bar{R}(\bm{n}_{\lambda})$ using Proposition \ref{prop-TransmissionRate}
            \If{$\bar{R}_{\lambda}\geq \alpha$}
                \State $\lambda_-=\lambda$
            \Else
                \State $\lambda_+=\lambda$
            \EndIf
        \EndWhile
        \Return $(\lambda_+^*,\lambda_-^*) \gets (\lambda_+,\lambda_-)$
    \EndProcedure
    \end{algorithmic}
\caption{Bisection Search}
\label{alg-BisectionSearch}
\end{algorithm}
\end{minipage}
\end{figure}

\end{document}